\newcommand{\INPUT}{\item[{\bf Input:}]}
\newcommand{\RR}{{\mathbb R}}
\newcommand{\myvec}[1]{\mathbf{#1}}
\newcommand{\ignore}[1]{}%
\newif\ifFULL
\title{Locally Adaptive Optimization: Adaptive Seeding for Monotone Submodular Functions}
\author{
Ashwinkumar Badanidiyuru\\
 Google\\
  \texttt{ashwinkumarbv@gmail.com}\\
  %\texttt{C}\\
   \and
  Christos Papadimitriou\\
 UC Berkeley\\
  \texttt{christos@cs.berkeley.edu}\\
  %\texttt{C}\\
  \and
  Aviad Rubinstein\\
 UC Berkeley\\
  \texttt{aviad@cs.berkeley.edu}\\
  %\texttt{C}\\
\and
  Lior Seeman\\
 Cornell Univeristy\\
  \texttt{lseeman@cs.cornell.edu}\\
  %\texttt{C}\\
  \and
  Yaron Singer\\
  Harvard University\\
  \texttt{yaron@seas.harvard.edu}
}
\newtheorem{theorem}{Theorem}[section]
\newtheorem{cor}[theorem]{Corollary}
\newtheorem{lemma}[theorem]{Lemma}
\newtheorem{definition}[theorem]{Definition}
\newtheorem*{theorem*}{Theorem}
\newtheorem*{corollary*}{Corollary}
\newtheorem*{conjecture*}{Conjecture}
\newtheorem*{lemma*}{Lemma}
\newtheorem*{thm*}{Theorem}
\newtheorem*{prop*}{Proposition}
\newtheorem*{obs*}{Observation}
\newtheorem*{rem*}{Remark}
\newtheorem*{definition*}{Definition}
\begin{document}
\frenchspacing
\sloppy
\widowpenalty10000
\clubpenalty10000
\date{}
\maketitle

\begin{abstract}
The Adaptive Seeding problem is an algorithmic challenge motivated by influence maximization in social networks: One seeks to select among certain accessible nodes in a network, and then select, adaptively, among neighbors of those nodes as they become accessible in order to maximize a global objective function.  More generally, adaptive seeding is a stochastic optimization framework where the choices in the first stage affect the realizations in the second stage, over which we aim to optimize.  

Our main result is a $(1-1/e)^2$-approximation for the adaptive seeding problem for any monotone submodular function. While adaptive policies are often approximated via non-adaptive policies, our algorithm is based on a novel method we call \emph{locally-adaptive} policies. These policies combine a non-adaptive global structure, with local adaptive optimizations. This method enables the $(1-1/e)^2$-approximation for general monotone submodular functions and circumvents some of the impossibilities associated with non-adaptive policies.

We also introduce a fundamental problem in submodular optimization that may be of independent interest: given a ground set of elements where every element appears with some small probability, find a set of expected size at most $k$ that has the highest expected value over the realization of the elements. We show a surprising result: there are classes of monotone submodular functions (including coverage) that can be approximated almost optimally as the probability vanishes. For general monotone submodular functions we show via a reduction from \textsc{Planted-Clique} that approximations for this problem are not likely to be obtainable. This optimization problem is an important tool for adaptive seeding via non-adaptive policies, and its hardness motivates the introduction of \emph{locally-adaptive} policies we use in the main result.  
\end{abstract}
\newpage
\pagenumbering{arabic}
\section{Introduction}
% !TeX root = ../SubmodularAdaptiveSeedingFull.tex
The surge of massive digital records of human interactions in recent years provides a new system-wide perspective on social networks.  In addition to observing and predicting patterns of collective human behavior, in many cases the dynamics of the network can be engineered.
One such example is when attempting to initiate a large cascade by seeding it at certain important nodes in the network to promote a product or social movement through word-of-mouth.
% by seeding important nodes in the network.  
The algorithmic challenge of selecting individuals who can serve as early adopters of a new idea, product, or technology in a manner that will trigger a large cascade in the social network is known as \emph{influence maximization}.  Since it was first posed by Domingos and Richardson~\cite{DR01,RD02} and elegantly formulated and further developed by Kempe, Kleinberg, and T{\'a}rdos~\cite{KKT03}, a broad range of algorithmic methods have been developed for this canonical problem~\cite{BHMW11,borgs2012influence,C08,RLK10,LAH06,LKGFVG07,KDD11,MR07}.
%Although there has been substantial progress on the problem, naive application of the state-of-the-art can often be ineffective.  

In many applications of influence maximization, despite having full knowledge of the network, one may only have access to a small slice of the network. In marketing applications for example, companies often reward influential users who visit their online store, or who have engaged with them in other ways (subscribe to a mailing list, follow the brand, install an application etc.).  If we think of users who arrive at a store or follow a brand as being randomly sampled from the network, observing high-degree users is a rare event.  This is simply due to the heavy-tailed degree distributions of social networks.  Since influence maximization techniques are based on selecting high degree users (not necessarily the \emph{highest} degree), their application on such samples can become ineffective.
In general, access to high degree individuals in social networks is often rare, which raises the following question.  
%In such cases the naive application of influence maximization techniques that select their users from such samples have can become ineffective.  

\begin{center}
\emph{Is it possible to design effective influence maximization strategies despite the rarity of influencers?}
\end{center}
 
To tackle the problem or rare influencers, the adaptive seeding framework was recently developed in~\cite{SS13}.  The framework is a stochastic optimization model which formalizes an intuitive approach: rather than spend the entire budget on the non-influential users, we can spend a fraction of the budget on the accessible users, wait for their friends to appear as a result, and optimize influence by using the remaining budget to seed influential friends.  
The idea is to leverage what's known as the \emph{friendship paradox}~\cite{feld1991}, which suggests that although people are not likely to be influential, they are likely to know someone who is.  In~\cite{LS15} it was shown that in well-established mathematical models of social networks there are asymptotic gaps between the degree of a random node and its neighbor.  This structural property implies that dramatic improvements to influence maximization are indeed achievable by optimizing influence over friends.  In recent work~\cite{HS15}, along with scalable algorithms for this problem, it was shown through various experiments on the Facebook graph that dramatic improvements to naive application of influence maximization are indeed obtainable through adaptive seeding.

\paragraph{The adaptive seeding model.}
The adaptive seeding model is a \emph{two-stage} stochastic optimization framework. 
We are given a set of nodes $X$, their set of neighbors $\mathcal{N}(X)$, each associated with a probability $p_i$, as well as a budget $k\in \mathbb N$ and a function $f:2^{\mathcal{N}(X)} \to \mathbb R$.  In the first stage, a set $S \subseteq X$ can be selected, which causes each one of its neighbors to materialize independently with probability $p_i$.  In the second stage, the remainder of the budget can be used to optimize the function $f(\cdot)$ over the realized neighbors.  This function quantifies the expected number of individuals in the network that will be influenced as a result of selecting a subset of early adopters.  The goal is to select a subset $S \subseteq X$ of size at most $k$ s.t. the function can be optimized in expectation over all possible realizations of its neighbors with the remaining budget $k-|S|$.  
Equivalently, our goal is to select $S \subseteq X$ s.t. in expectation over all possible realizations $R_{1},\ldots,R_{m}$ of $\mathcal{N}(X)$ the value of a set of its neighbors $T_i$ of size $k-|S|$ that appears in the realization $R_i$ is optimal.  

Stated in these terms the objective is:
\begin{align*}
\max_{S \subseteq X} \sum_{i=1}^{m}f(T_i)\cdot p(R_i) & \hspace{0.2in}\\
T_i \subseteq R_i \cap \mathcal{N}(S) &\hspace{0.2in} \forall i \in [m] \\
|S|+ |T_i| \leq k & \hspace{0.2in} \forall i \in [m]
\end{align*}

The adaptive seeding formulation models the challenge of recruiting neighbors who can become effective influencers using the budget, rather than trying to influence them to forward the information without incentives as in the standard influence maximization framework.%
\footnote{Having probabilities on the nodes in $\mathcal{N}(X)$ and not the edges implies that the neighbors are not more likely to appear if they have more parents in $X$ selected by the algorithm.  This corresponds to what's known in microeconomics as a standard Bayesian utility model with no externalities.  For submodular functions, this problem can be shown to be equivalent to the one where probabilities appear on the edges and not the nodes.  This corresponds to a model where nodes in the second stage are influenced by nodes in the first stage through the independent cascade model~\cite{KKT03}.}
%Influence functions are known to be special cases of submodular functions~\cite{KKT03,MR07}. 
Although it seems quite plausible that the probabilities of attracting neighbors could depend on the rewards they receive, the simplification assuming unit costs is deliberate. 
This simplification is consistent with the celebrated Kempe-Kleinberg-T{\'a}rdos model~\cite{KKT03}, and can be extended to the case where nodes take on different costs~\cite{RSS14}.  %Of course, if the likelihood of becoming an early adopter is inversely proportional to one's influence, then any influence maximization model loses substance.   

While the motivation is influence maximization, adaptive seeding is a versatile framework of stochastic optimization
\ifFULL
(see related work section for further discussion on stochastic optimization)%
\else
(see full version for further discussion on related work in stochastic optimization)%
\fi
. At its essence, it formalizes the following question. 

\begin{center}
\emph{Given a distribution on the consequences of the actions we take in the present, can we design algorithms that optimize events in the future? 
}
\end{center}

The main question when considering stochastic optimization models is whether the same guarantees as standard optimization can be obtained~\cite{shmoys2004stochastic}.  It is not hard to show that for very simple objective functions such as $f(S)=|S|$, the two-stage optimization is NP-hard even in the non-stochastic case, i.e. when all probabilities are one.  Thus, approximation is needed.  Therefore, in the context of adaptive seeding the question is whether an objective that can be well approximated in standard optimization can also be well approximated in adaptive seeding.

\paragraph{Submodularity.}  A function $f:2^{N} \to \mathbb{R}_{+}$ is submodular if $f(S \cup T) \leq f(S) + f(T) - f(S \cap T)$.  Equivalently, a function is submodular if it has a natural diminishing returns property: for any $S\subseteq T \subseteq N$ and $a \in N\setminus T$ a function is submodular if $f_{S}(a) \geq f_{T}(a)$, where $f_{A}(B) = f(A \cup B) - f(A)$ for any $A,B \subseteq N$.  Unless otherwise stated, we will assume the function is normalized ($f(\emptyset)=0$) and \emph{monotone} ($S\subseteq T$ implies $f(S) \leq f(T)$).  In standard optimization submodularity is a certificate for desirable approximation guarantees (1-1/e approximation for maximizing such functions under a matroid constraint~\cite{V08}), and a slightly broader class known as \emph{fractionally subadditive} functions already exhibits strong information-theoretic lower bounds~\cite{MSV08}.%Information theoretic lower bounds are known for optimizing a slightly broader class of functions known as \emph{fractionally subadditive} under a cardinality constraint~\cite{MSV08}.   

\paragraph{Adaptive Seeding of submodular functions.}  The effectiveness of adaptive seeding depends crucially on our ability to optimize general classes of submodular functions.  Say that a function can be \emph{adaptively seeded} if the adaptive seeding problem can be approximated within a constant factor for this objective.  The main result in~\cite{SS13} shows that any function in a class known as the \emph{triggering model} --- a special class of monotone \emph{submodular} functions defined in the seminal work of Kempe, Kleinberg, and T{\'a}rdos~\cite{KKT03} --- can be adaptively seeded.  
But far more general models of submodular functions are used to describe influence in social networks~\cite{KKT03,MR07}, and the techniques of~\cite{SS13} cannot be applied to these
\ifFULL
(see related work section
\else
(see full paper
\fi
for further discussion on submodular function, their connection with the influence maximization problem and the limitation of previous techniques for adaptive seeding). Thus the main question in this work is:
\begin{center}
\textit{Can any submodular function be adaptively seeded?}
\end{center} 

Naturally, we will seek algorithms that obtain the best approximation ratio possible.  The main challenge would be to obtain an algorithm that achieves a $(1-1/e)^2$ approximation ratio for general monotone submodular functions.  This bound is a natural goal for this problem, as we discuss below.

\subsection{Warm up: the non-stochastic case}
We begin by considering the non-stochastic version of the problem.  That is, the version in which every node in the set of neighbors appears with probability one.  Here we are given a set $X$ and its neighbors $\mathcal{N}(X)$, there is a monotone submodular function defined on $\mathcal{N}(X)$, and the goal is to select $t\leq k$ elements in $X$ connected to a set of size at most $k-t$ in $\mathcal{N}(X)$ for which the submodular function has the largest value.  
A trivial solution would be to run a greedy algorithm with budget of $k/2$ on parent-child pairs.  That is, at every stage select the node in $\mathcal{N}(X)$ whose marginal contribution given the nodes already selected is the largest, and add one of its parents (if none have been added in previous rounds) to ensure the solution is feasible.  Since we need at most $k/2$ parents to select $k/2$ children the solution is feasible, and submodularity guarantees the solution's value is at least half of the value as if we were to run the algorithm with a budget of $k$, which is an upper bound on the optimal solution.  It is well known that the greedy algorithm is a $1-1/e$-approximation to the optimal solution, and hence this trivial algorithm would be a $(1-1/e)/2$-approximation.  

%$\mathcal{N}(X)$, and then for each node in the solution returned by the algorithm, arbitrarily select a single parent in $X$.  Since we need at most $k/2$ parents this would be a feasible solution (every node in $\mathcal{N}(X)$ has at least one parent in $X$) and from submodularity, the fact that we use a budget of $k/2$ guarantees that we obtain at least half of the value as if we were to run the algorithm with a budget of $k$.  It is well known that the greedy algorithm is a $1-1/e$-approximation to the optimal solution, and hence this trivial algorithm would be a $(1-1/e)/2$-approximation. 
%
%
%
%A trivial solution would be to run a greedy algorithm with budget of $k/2$ nodes on $\mathcal{N}(X)$, and then for each node in the solution returned by the algorithm, arbitrarily select a single parent in $X$.  Since we need at most $k/2$ parents this would be a feasible solution (every node in $\mathcal{N}(X)$ has at least one parent in $X$) and from submodularity, the fact that we use a budget of $k/2$ guarantees that we obtain at least half of the value as if we were to run the algorithm with a budget of $k$.  It is well known that the greedy algorithm is a $1-1/e$-approximation to the optimal solution, and hence this trivial algorithm would be a $(1-1/e)/2$-approximation. 
%
\paragraph{Optimal approximation via $\epsilon$-blocks.}  
A natural extension to the above approach would be to pair parent nodes (i.e. nodes in $X$) with \emph{subsets} of their children.  We call such pairs \emph{blocks}, and the \emph{density} of a block is simply the ratio between its marginal contribution and its size: for a set of children $T \subseteq \mathcal{N}(X)$, the marginal density of a block $(x,B)$ with respect to $T$ is
$$f_{T}(B)/(1+|B|).$$
Ideally, for each parent we would add the subset of children which makes for the densest block, as one can then show that an algorithm which iteratively adds the densest block results in the optimal $1-1/e$ approximation.  However, even for coverage functions finding the densest block implies solving an NP-hard problem.  Instead of densest blocks we can consider using $\epsilon$-blocks: a node in $x \in X$ and a subset of its neighbors of size at most $1/\epsilon$.  Note that for any constant $\epsilon>0$, finding the densest $\epsilon$-block can be done in polynomial time by brute forcing all subsets of $1/\epsilon$ neighbors of every node $x \in X$ and computing their value.  Importantly, one can show that for any $\epsilon>0$ the densest $\epsilon$-block $B$ is a $(1-\epsilon)$ approximation to the densest block $O$ if $|O|$ is larger than $1/\epsilon$:
%
%lior
%$$\frac{f_T(B)}{|B|} \geq \frac{f_T(B)}{1/\epsilon} \geq \frac{f_T(O)}{|O|}$$
%So we get that $f_T(B) \geq f_{T}(O)/\epsilon|O|$ which implies:
%$$ \left (\frac{1}{1+1/\epsilon} \right )f_T(B) \geq \left ( \frac{1}{1+1/\epsilon} \right ) \frac{f_T(O)}{\epsilon|O} = \left (\frac{1}{1+\epsilon}\right) \frac{f_T(O)}{|O|} \geq \Big ( 1-\epsilon \Big )\frac{f_T(O)}{1+|O|}.$$
$$ \frac{f_T(B)}{1+|B|} \geq   \left (\frac{1 }{1+1/\epsilon} \right )\frac{1/\epsilon}{|O|} f_T(O) = \left (\frac{1}{1+\epsilon}\right) \frac{f_T(O)}{|O|} \geq \Big ( 1-\epsilon \Big )\frac{f_T(O)}{1+|O|}$$
where the first inequality is by submodularity and taking the $1/\epsilon$ elements of $O$ with highest value.

%an $\epsilon$-block that is a $(1-\epsilon)$ approximation to the densest block.  By applying a standard inductive argument one can show that this results in a $(1-1/e-\epsilon)$ approximation.
%For each element $x \in X$ we will define its $\epsilon$-block to be its set of children of size at most $1/\epsilon$ whose marginal contribution is densest, where density of a set of nodes is simply the ratio between their marginal contribution and cardinality.  More formally, given $T \subseteq \mathcal{N}(X)$ the $\epsilon$-block of $x \in X$ is: 
%$$\argmax_{B\subseteq \mathcal{N}(x) |B| \leq 1/\epsilon}\frac{f_{T}(B)}{|B|}$$  
%$$\argmax_{\substack{B\subseteq \mathcal{N}(x)\\ |B| \leq 1/\epsilon}}\frac{f_T(B)}{|B|}$$  
%The densest block $x \in X$ is simply the node whose $\epsilon$-block is maximal.  
The algorithm is now simple: until exhausting the budget, add the densest $\epsilon$-block to the solution.  For any $\epsilon>0$, this algorithm is a $(1-1/e -\epsilon$) approximation, and the idea of the analysis is natural; at every stage this algorithm selects an $\epsilon$-block that is a $(1-\epsilon)$ approximation to the densest block, and by applying a standard inductive argument one can show that this results in a $(1-1/e-\epsilon)$ approximation.
Although it only holds in the dummy non-stochastic version of the problem, this approach encapsulates the core idea in this paper.

\subsection{Synopsis}
In the stochastic case the optimal solution is an \emph{adaptive} policy: one which selects a subset from  $X$, and \emph{after} the realization of its neighbors, selects an optimal solution with its remaining budget.  Since adaptive policies are notorious in stochastic optimization for their difficulty, the standard approach is to design non-adaptive policies which approximate adaptive policies well.  Informally, a \emph{non-adaptive} policy is a policy which selects the subset $S$ and a set of its neighbors, \emph{a priori} to their realization. We cannot hope to obtain an approximation better than $1-1/e$ for the optimal non-adaptive policy unless P=NP~\cite{F98}, as the non-stochastic case is a special case. As we show later, the ratio between non-adaptive policies and adaptive ones can be as bad $1-1/e$, and we therefore naturally seek algorithms whose approximation ratio is $(1-1/e)^2$.
We do not know whether $(1-1/e)^2$ is the optimal approximation ratio, and it is one of the main open questions in this paper.

Similar to the exposition in the warmup above, finding a $(1-1/e)$ approximation to the optimal non-adaptive policy (and hence a $(1-1/e)^2$ approximation to the optimal adaptive policy) reduces to finding $\epsilon$-blocks with arbitrarily good precision in \emph{the stochastic case}.
The key challenge in computing $\epsilon$-blocks here reduces to the following fundamental problem (which may be of independent interest) we call \emph{submodular-optimization-with-small-probabilities (SOSP)}: given a ground set of elements where every element appears with some small probability, find a set of \emph{expected} size at most $k$ that has the highest expected value over the realization of the elements. 
While for some classes of submodular functions near-optimal solutions for this problem can be obtained, arbitrarily good approximations for this problem are not likely to be obtainable in general. 
In other words, non-adaptive policies do not suffice for getting  a $(1-1/e)^2$-approximation algorithm. 

Our main result builds on an alternative strategy for defining $\epsilon$-blocks.
Instead of non-adaptive policies we employ what we call {\em locally-adaptive policies}. Intuitively, a locally adaptive policy consists of a set of $\epsilon$-blocks, where within each block the policy can make \emph{adaptive decisions}.  
%We show that while the non-adaptive block structure allows for greedy optimizations, 
The adaptivity within a block lets us find the optimal $\epsilon$-block and enables the $(1-1/e)^2$ guarantee.% for general monotone submodular functions.  

%\subsection{Adaptive Seeding of Submodular functions}
%\paragraph{Non-adaptive policies and their limitation.}  The optimal solution is an \emph{adaptive} policy: one which selects a subset from  $X$, and after the realization of its neighbors, selects an optimal solution with its remaining budget.  Since adaptive policies are notorious in stochastic optimization for their difficulty, the standard approach is to design non-adaptive policies which approximate adaptive policies well.  Informally, by non-adaptive policy we mean a policy which selects the subset $S$ and a randomization on its neighbors, a priori to their realization.   As we show here, for several interesting classes of submodular functions, such as matroid rank sum functions, non-adaptive policies can be used to obtain a $(1-1/e)^2$-approximation guarantees.  This bound seems like a natural candidate for a tight approximation ratio for the problem.  Since the problem is a (strong) generalization of $\textsc{Max-Cover}$, a loss of $(1-1/e)$ is inevitable, unless P=NP\cite{F98}.  The other factor of $(1-1/e)$ is due to the (tight) adaptivity gap between adaptive and non-adaptive policies which we use in our analysis.  Unfortunately, by using a reduction from the {\em planted clique problem},  we also show that this approach cannot work for the class of general submodular functions.  
%

\subsection{Results}
\begin{itemize}

\item \textbf{Tight Adaptivity gap.} In Section~\ref{sec:RR-adaptive} we show that non-adaptive policies approximate adaptive policies within almost a factor of $1-1/e$.  We then show that this gap is tight.  

\item \textbf{Algorithm for SOSP.} In section~\ref{apx-rr} we introduce this problem and show how it can be solved almost optimally for matroid-rank-sum functions (which include coverage functions) by convex programming as the probabilities vanish. We then show how to use this problem to design a $(1-1/e)^2$-approximation algorithm for this class of submodular functions. 

\item \textbf{Hardness of SOSP.} We show that for general monotone submodular functions, the problem is hard to approximate to arbitrary precision by a reduction from the \textsc{Planted-Clique} problem. Thus, for general submodular functions computing the optimal $\epsilon$-block is also hard.

\item \textbf{Our main result: A $(1-1/e)^2$-approximation algorithm through locally-adaptive policies.} 
In Section~\ref{sec:local-adaptive} we describe our main algorithm designed for any monotone submodular function using value oracles. The algorithm finds a locally-adaptive policy whose value is guaranteed to be at least about $1-1/e$ of the value of the optimum locally-adaptive policy.  Naturally, it remains to prove that the best locally-adaptive solution is $1-1/e$ away from the true optimum. We establish this by showing that {\em any non-adaptive policy can be approximated arbitrarily well by a locally-adaptive policy}, and utilizing the bounds we have for such policies.  The idea of locally-adaptive policies is new and may be of independent interest.

\item \textbf{Adaptivity gap for locally-adaptive policies.} In section~\ref{sec:localGap} we exhibit a gap ($\approx 0.853$) between the optimal locally-adaptive policy and the optimal adaptive policy.  

%\item
\end{itemize}

\ifFULL
\subsection{Related work}
% !TeX root = ../SubmodularAdaptiveSeedingFull.tex

\paragraph{Adaptive Seeding.}  The Adaptive Seeding model was introduced by Seeman and Singer in~\cite{SS13}.
They use a concave relaxation to achieve a constant bound approximation for the adaptive seeding problem with influence functions in the Triggering model.
Unfortunately, their techniques do not extend to general submodular functions.
In this work we introduce new non-adaptive and adaptive techniques for this problem, and achieve a better approximation bound for the Triggering model.
Most importantly, our results hold for the entire class of submodular functions.  
In a follow-up paper~\cite{RSS14} the adaptive seeding problem is studied under \emph{knapsack constraints}. 
While the techniques used in that paper are applicable here, they give a different approximation bound than what we achieve here, which as shown in that paper is actually the appropriate bound for that case.
However, as we show, in the cardinality case we can get better approximation bounds. 

\paragraph{Submodular Functions.}  
Monotone submodular set functions maximization is an extensively studied problem. Nemhauser et al.~\cite{NWF78} show a simple greedy algorithm that achieves a $(1-1/e)$-approximation for maximizing monotone submodular functions under cardinality constraints. Feige~\cite{F98} shows that this is the best possible unless P=NP.  
%For the budgeted case, Sviridenko~\cite{S04} shows that a slight variation on the greedy algorithm achieves the same approximation ratio.  
A continuous version of the greedy algorithm was introduced by Calinescu et al.~\cite{CCPV07} to solve a \emph{multilinear} extension of the problem which combined with the \emph{pipage rounding}~\cite{AS04} techniques is shown to give a $(1-1/e)$-approximation under matroid constraint for a special class of submodular function. Vondrak~\cite{V08} shows that similar ideas can be used to give the same approximation for all monotone submodular functions. Checkuri, Vondrak, and Zenklusen~\cite{CVZ11} extend this framework and introduce \emph{Contention Resolution Schemes} to show how to account for multiple matroid and knapsack constraints and also extend to non monotone functions. In this paper we use the contention resolution and a bound from~\cite{CCPV07} to bound the adaptivity gap between non-adaptive and adaptive policies. 

Submodular functions have been a crucial tool in the influence maximization literature. Kempe, Kleinberg, and Tardos~\cite{KKT03} show that a class of influence models called \emph{Triggering model} are all submodular functions and thus can be approximated by a greedy algorithm as discussed above. Mossel and Roch~\cite{MR07} proved a conjecture posted in~\cite{KKT03} and show that far more general influence models can be expressed by a submodular function. These models have been studied in numerous papers both theoretically and empirically~\cite{BHMW11,borgs2012influence,C08,RLK10,LAH06,LKGFVG07,KDD11}, and thus are the main focus of this paper as well.

\paragraph{Stochastic optimization.}  The adaptive seeding model is a stochastic optimization framework (see~\cite{shapiro2009lectures} for a survey).  
There has been extensive work on stochastic optimization problems in the context of approximation algorithm varying from two-stage with recourse minimization problems (\cite{gupta2004boosted},\cite{immorlica2004costs}, \cite{ravi2004hedging}, \cite{shmoys2004stochastic,shmoys2006approximation},\cite{srinivasan2007approximation}) and of maximizing an objective under a budget constraint (\cite{dean2004approximating}, \cite{gupta2012approximation},\cite{kleinberg2000allocating}). The adaptive seeding model is different from these problems as it combines a two stage model with a maximization under budget problem. 
Moreover, the constraint structure in adaptive seeding is such that the first stage decisions determine the available actions in the second stage (not just through the shared budget constraint).
Another variant of multi-stage stochastic submodular maximization was studied by Asadpour et al.~\cite{asadpour2008stochastic} and Golovin and Krause~\cite{golovin2011adaptive}.  
Despite the similar name, the adaptive seeding model is substantially different from these models, both in motivation and in the fact that the second stage problem is strongly dependent on choices made in the first stage.
In these models the algorithm has access to the entire network but has the freedom to choose one node at a time
(or group of nodes) and observe the realized value of these nodes.
Yang et al.~\cite{YHLC13-APM} and Chen et al.~~\cite{CLLR15-AIM} also study ``multi-level" models of the influence maximization problem, the latter partially inspired by the adaptive seeding model. However, their motivation, model and benchmark are substantially different from the adaptive seeding model, and apart from some special cases, their models do not have any constant factor approximations.

%as these models adaptively seed nodes only in the given set and do not try to extend the set of potential seeds. 
\fi
%\section{Preliminaries}
%\label{sec:prelim}
%\input{sections/prelim.tex}
\section{Non-adaptive policies}\label{sec:RR-adaptive}
A non-adaptive policy is a pair of sets ${(S,T) \subseteq X \times \mathcal{N}(X)}$ where $S$ represents the set selected in the first stage and $T \subseteq \mathcal{N}(S)$ is the set selected in the second stage.  
The natural definition for feasibility would be that the policy selects at most $k$ nodes, though it is easy to construct examples that show that such strict policies have an unbounded approximation ratio.  We therefore consider \emph{relaxed} non-adaptive policies whose guarantee is to select at most $k$ nodes \emph{in expectation}, where the expectation is over the randomization in the model, i.e. the probabilities of nodes arriving in the second stage.  
In the rest of the paper we drop the \emph{relaxed} prefix and just call such policies non-adaptive policies.
We define the value of a non-adaptive policy $(S,T)$ to be ${F(T) = \sum_{i\in [m]}p(R_i)f(T \cap R_i)}$ and its cost as $|S|+\mathcal{C}(T)$, where $\mathcal{C}(T)=\sum_{i\in T} p_i$. Finding the optimal non-adaptive policy requires solving the optimization problem: 
$$\texttt{OPT}_{NA}=\max_{S,T} \left \{ F(T) \ : \ |S| + \mathcal{C}(T)  \leq k, \ S \subseteq X, T \subseteq \mathcal{N}(S) \right \}.$$
%
%\begin{eqnarray}
%\max 			  & F(S,\myvec{q}) \label{eq:nonadaptive}\nonumber\\%\sum_{\substack{T \subseteq \mathcal{N}(S)}} \Big (\prod_{i \in T}p_{i}q_{i} \prod_{i\notin T}(1 - p_{i}q_{i}) \Big ) f( T) \label{eq:nonadaptive} \nonumber\\ 
%\textrm{s.t.}     & S \subseteq X \hspace{2.45in} \label{eq:con1} \nonumber \\
%			       & |S| + \myvec{p}^T\myvec{q}  \leq K \hspace{2.0in} \label{eq:con2}  \nonumber
%			        \end{eqnarray}
%
% 
%   \begin{alignat*}{2}
%    \text{max }& F(S,\myvec{q}) \label{eq:nonadaptive} \\
%    \text{s.t. }   & S \subseteq X  \label{eq:con1} \nonumber\\
%                       & |S| + \myvec{p}^T\myvec{q}  \leq K \label{eq:con2}
%             \end{alignat*}
 %A \emph{feasible solution} to a non-adaptive policy is a subset $S \subseteq X$ and vector $\myvec{q} \in [0,1]^{n}$, where $q_{i}$ is strictly positive only if $i \in \mathcal{N}(S)$ and $n = |\mathcal{N}(X)|$.  Unless stated otherwise, by a solution to a non-adaptive policy we mean a feasible one.
%For brevity we will use $\mathcal{F}$ to denote solutions $(S,q)$ that respect~(\ref{eq:con1}) and~(\ref{eq:con2}) above with the extra property that $q$ has non zero values only on nodes in $\mathcal{N}(S)$.
The crucial difference between these policies and adaptive ones is that non-adaptive policies fix a set $T$ a priori to seeing the realization, whereas adaptive policies have the luxury of selecting a different set $T_i$ for every realization $R_i$.  
Note that these policies are not a relaxation of adaptive policies nor are they a special case, since on the one hand they fix one set $T$ but on the other hand are only restricted to the budget in expectation. Thus, it is not immediately clear they are useful for approximating adaptive policies.

We first show that the adaptivity gap (i.e. the ratio between the value of the optimal adaptive policy and that of the non-adaptive policy) is exactly $1-1/e$.

\paragraph{A tight $1-1/e$ adaptivity gap.}
%\paragraph{Blackbox reduction from non-adaptive to adaptive policies.}\label{rr-2-adaptive}
%Similar to~\cite{SS13, knapsack_paper} the blackbox reduction simply takes a non-adaptive policy $(S,\mathbf q)$ and uses $S$ as the adaptive policy.  In the appendix we show that the adaptive solution $S$ guarantees almost the entire value of $F(S,\mathbf q)$. 
%We also bound the adaptivity gap between non-adaptive policies and optimal adaptive ones.  
%In the appendix we show that any approximation to the non-adaptive optimal policy yields a 
Consider an instance with a single node in $X$ connected to $n =1/\delta^2$ nodes, each appearing with probability $\delta$, for some small $\delta>0$.  The function is:
$$
f(T) = \begin{cases} 1 &\mbox{if } T \neq \emptyset \\ 
0 & \mbox{otherwise}. \end{cases}
$$

For a budget of $2$, an optimal adaptive policy seeds the single node in $X$, waits for the realization of its neighbors, and seeds whichever node realizes.  The optimal non-adaptive policy here seeds the single node in $X$ and spends the rest of its budget on $1/\delta$ nodes in $\mathcal{N}(X)$, which has an expected utility of $1-(1-\delta)^{1/\delta} \approx (1-1/e)$.  The adaptivity gap is therefore at least $1-1/e$.
\ifFULL

We
\else

In the full version we  
\fi
complement the above example by showing that the adaptivity gap is at most $1-1/e$.  
The proof utilizes an interesting connection between the optimal adaptive policy and the \emph{concave closure} of the underlying submodular function~\cite{CCPV07}. Using this connection and properties of the concave closure from~\cite{CCPV07} we can bound the error of a \emph{fractional} non-adaptive policy that we then round to get our result. 
\ifFULL
\begin{lemma}\label{lemma:naToA}
For any budget $k$, $\texttt{OPT}_{NA}\geq  (1-1/e-2/k)\texttt{OPT}_{A}$.
\end{lemma}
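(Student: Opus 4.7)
I would start from an optimal adaptive policy $(S^*,\{T_i^*\}_i)$ of value $\texttt{OPT}_A=\sum_i p(R_i)f(T_i^*)$ and record the marginals it induces on $\mathcal{N}(S^*)$: $y_v:=\sum_i p(R_i)\mathbf{1}[v\in T_i^*]$. Two easy facts will be crucial: $y_v\leq p_v$ (since $T_i^*\subseteq R_i$) and $\sum_v y_v=\mathbb{E}_i|T_i^*|\leq k-|S^*|$ (from the budget constraint). Because $\{(T_i^*,p(R_i))\}_i$ is then a distribution on subsets of $\mathcal{N}(S^*)$ with marginals exactly $y$, the concave closure $f^+$ of $f$ satisfies
\[
\texttt{OPT}_A \;=\; \sum_i p(R_i)\,f(T_i^*)\;\leq\; f^+(y).
\]

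Next I would invoke the correlation-gap inequality $g(y)\geq(1-1/e)f^+(y)$ for monotone submodular $f$ (a consequence of the analysis in \cite{CCPV07}), where $g$ is the multilinear extension of $f$. The fractional non-adaptive policy $(S^*,x)$ with $x_v:=y_v/p_v\in[0,1]$ is the natural candidate: sampling $T$ independently from $x$ and $R$ independently from $p$, we have $v\in T\cap R$ with probability $x_vp_v=y_v$, so
\[
F_{\mathrm{frac}}(x)\;=\;\mathbb{E}_{T,R}[f(T\cap R)]\;=\;g(y),
\]
while the cost is $|S^*|+\sum_v x_v p_v=|S^*|+\sum_v y_v\leq k$. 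Combining the two inequalities, the best fractional non-adaptive policy achieves at least $(1-1/e)\texttt{OPT}_A$.

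The remaining step, and the one I expect to be the main obstacle, is rounding $x$ to an integral policy while losing only a factor $(1-2/k)$. My plan here is a shrink-and-round argument: set $x':=(1-2/k)x$, so that since the multilinear extension of a monotone submodular function is concave along rays through the origin, $g((1-2/k)y)\geq(1-2/k)g(y)$, while the expected cost drops to at most $(1-2/k)(k-|S^*|)$, leaving slack $(2/k)(k-|S^*|)$ to the knapsack budget. A Chernoff bound (using $p_v\leq 1$) should then show $\mathcal{C}(T)\leq k-|S^*|$ with probability bounded away from zero when $T$ is obtained by independent coordinatewise rounding of $x'$, and conditioning on this event gives $\mathbb{E}[F(T)]\geq(1-2/k)g(y)$. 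Chaining yields $\texttt{OPT}_{NA}\geq(1-1/e)(1-2/k)\texttt{OPT}_A\geq(1-1/e-2/k)\texttt{OPT}_A$. The hardest piece is controlling the rounding loss to $O(1/k)$ while maintaining feasibility; degenerate regimes such as $|S^*|\approx k$ (where the Chernoff argument degrades because $k-|S^*|$ is tiny) may require a direct argument, for instance keeping $S^*$ together with the single most valuable neighbor in $\mathrm{supp}(x)$.
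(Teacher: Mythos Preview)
Your outline through the correlation-gap step is essentially the paper's argument: record the marginals $y_v$ of the optimal adaptive policy, observe $y_v\le p_v$ and $\sum_v y_v\le k-|S^*|$, bound $\texttt{OPT}_A\le f^+(y)$ via the concave closure, and then use the \cite{CCPV07} inequality $g(y)\ge(1-1/e)f^+(y)$ to get a fractional non-adaptive solution of value $\ge(1-1/e)\texttt{OPT}_A$ and cost at most $k$. So far so good.

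The gap is in the rounding. A $(1-2/k)$ shrink followed by independent rounding and Chernoff does \emph{not} give what you claim. After shrinking, $\mu:=\E[\mathcal{C}(T)]\le(1-2/k)(k-|S^*|)$, and you need $\mathcal{C}(T)\le k-|S^*|$, i.e.\ a relative deviation $\delta\approx 2/k$ above the mean. The multiplicative Chernoff bound gives
\[
\Pr[\mathcal{C}(T)>k-|S^*|]\;\le\;\exp\!\big(-\delta^2\mu/3\big)\;\le\;\exp\!\big(-\Theta((2/k)^2\cdot k)\big)\;=\;\exp\!\big(-\Theta(1/k)\big),
\]
which is $1-\Theta(1/k)$, not bounded away from $1$. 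Even if feasibility held with constant probability, ``conditioning on this event gives $\E[F(T)]\ge(1-2/k)g(y)$'' is unjustified: feasibility is a downward event in $T$ while $F$ is monotone, so conditioning on it can only \emph{decrease} $\E[F(T)]$, and you have no control on by how much (the infeasible mass could carry value up to $f(\mathrm{supp}(x))$, which need not be $O(g(y))$). This is not a boundary issue about $|S^*|\approx k$; it fails already when $k-|S^*|=\Theta(k)$.

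The paper avoids randomized rounding altogether. It applies \emph{pipage rounding} to the fractional vector $t$ (your $x$), which produces---with \emph{no loss} in the multilinear value---an integral vector except for at most one fractional coordinate, and one may assume that coordinate has the smallest marginal density. The single leftover coordinate is then disposed of by a short case analysis: if $|S^*|<k/2$ drop it (its density is at most the average, and the remaining cost is $>k/2$, so the loss is $\le 2/k$); if $|S^*|\ge k/2$ instead delete the least-valuable element of $S^*$ (again a $\le 2/k$ loss by submodularity) and keep the fractional item integrally. Replacing your Chernoff step with this pipage-plus-case-analysis argument closes the gap and yields exactly $(1-1/e-2/k)$.
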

\begin{proof}
%We construct $(S,\myvec{q})$ such that $f^+(S,\myvec{q}) \geq  \texttt{OPT}_{A}$, 
%and then the result follows from Lemma \ref{lem:f_plus}.
%In particular, we take $S$ to be the set chosen by an optimal adaptive solution,
%and let $\myvec{q}$ be the marginal probabilities on each node in the same adaptive solution. 
%The inequality follows because $f^+$ is a (marginal-preserving) relaxation of the independence between realizations of different nodes.

Let $S$ be the set chosen by the optimal adaptive policy and let $T_i$ be the set chosen by this policy in realization $R_i$.  
%For a set $T$ let $\mathcal{C}(T)=\{i|T=T_i\}$ which is the set of realizations in which $T$ is the chosen set. 
%Let $K_2=$.
%For a set $R\subseteq\mathcal{N}(S)$ let $p_R$ be the probability that $R$ is realized on the second stage (according to the probabilities $\myvec{p}$). 
For a set $T\subseteq\mathcal{N}(S)$ let $\alpha_T$ be the total probability that $T$ is chosen by the adaptive policy in the second stage. That is, $\alpha_T=\sum_{i\in\{i|T=T_i\}}p(R_i)$. 
For $i\in\mathcal{N}(S)$ let $\myvec{q}_i$ be the probability that $i$ is seeded over all realizations in the second stage. That is, $\myvec{q}_i=\sum_{\{T|i\in T\}}\alpha_T$.
Consider the function:
$$f^+(\myvec{q})=\max_{\vec{\beta}}\{\sum_{T\subseteq\mathcal{N}(S)}\beta_Tf(T)|\sum_T\beta_T=1;\beta_T\geq0;\sum_{T}\beta_T1_T=\myvec{q}\}.$$
Obviously, $\sum_T\alpha_T=1$ and $\sum_{T}\alpha_T1_T=\myvec{q}$. Thus $\vec{\alpha}$ is a valid $\vec{\beta}$ for which $f^+(\myvec{q})$ optimizes over. Since $\texttt{OPT}_{A}=\sum_T\alpha_Tf(T)$ this mean that $f^+(\myvec{q})\geq \texttt{OPT}_{A}$.

Consider instead the process in which at the second stage each element $y_i\in \mathcal{N}(S)$ is chosen independently with probability $\myvec{q}_i$ and call the value of this process $F(\myvec{q})$. By a consequence of \cite[Lemma 5]{CCPV07} we know that $F(\myvec{q})\geq (1-1/e)f^+(\myvec{q})\geq (1-1/e)\texttt{OPT}_{A}$.
Now note that $\myvec{q}_i \leq \myvec{p}_i$ as an item can't be seeded with a higher probability than the probability it realizes. 
Thus, there exists $\myvec{t}\in[0,1]^{|\mathcal{N}(S)|}$ such that for every $i\in\mathcal{N}(S)$  $\myvec{q}_i=\myvec{t}_i\myvec{p}_i$. Since in each realization only $k-|S|$ element are chosen, we know that $\myvec{t}^T\myvec{p}<k-|S|$. 
Thus, $\myvec{t}$ can be understood as a fractional solution for the second stage set of the non-adaptive policy.  We can round $\myvec{t}$ using the \emph{pipage rounding}~\cite{AS04} technique in order to get a vector with only one fractional solution and no loss in value. 
Note that we can assume without loss of generality that the fractional entry has the smallest marginal density out of all non-zero entries and that there are no fractional entries if $\myvec{t}^T\myvec{p}<k-|S|$.

First assume that $|S|< k/2$.
To get $T$ we take the items for which the rounded vector entry is $1$. From submodularity and the fact that $\myvec{t}^T\myvec{p}\geq k/2$ we get that the solution is a valid non-adaptive policy and has at most $2/k$ loss of the fractional solution.
If $|S|\geq k/2$ we can instead remove an item of $S$ (and all the entries that are connected only to it) with the least marginal value and include the fractional entry in $T$ (if it is not connected to that item). From submodularity, we can divide the value of the solution between the different items of $S$ such that at least one of them has marginal value less than $2/k$ so we get a valid solution with at most $2/k$ loss.
So we constructed a valid non-adaptive policy whose value is at least $(1-1/e-2/k)\texttt{OPT}_{A}$ and thus get our result.
\end{proof}

We next
\else
In addition, we
\fi
show that a non-adaptive policy can be converted to an adaptive policy with small loss by using the contention resolution scheme~\cite{CVZ11}. %  
\ifFULL
%\begin{lemma}\label{lemma:nonAdaptToAdapt}
%For any  $\epsilon\in(0,1/2)$ and for any non-adaptive policy \mbox{$(S,T)$} 
%such that \mbox{$\xi=K-|S|>\frac{1}{\epsilon}$}, 
%there exist an adaptive policy with value 
%\mbox{$\geq \big (1-\epsilon\big) (1-\exp(-\frac{\epsilon^2\xi-2}{3})\big)F(T)$}.
%\end{lemma}
\begin{lemma}\label{lemma:nonAdaptToAdapt}
For every  $\epsilon\in(0,1/5)$ and for any non-adaptive policy \mbox{$(S,T)$} 
such that \mbox{$k-|S|>\epsilon^{-4}$}, 
there exist an adaptive policy with value 
\mbox{$\geq \left (1-2\epsilon\right)F(T)$}.
\end{lemma}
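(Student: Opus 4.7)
The plan is to build an adaptive policy by reusing $S$ in the first stage and adding a subsample-and-truncate step in the second. Let $k' := k - |S|$ and define the fractional vector $y \in \mathbb{R}^{\mathcal{N}(X)}$ by $y_i = p_i \mathbf{1}[i \in T]$, so that $\sum_i y_i = \mathcal{C}(T) \leq k'$ and $F(y) = F(T)$. The adaptive policy seeds $S$ and, after observing the realization $R$, independently keeps each $i \in T \cap R$ with probability $1-\epsilon$; call the resulting random set $R^\star$. It outputs $R^\star$ if $|R^\star| \leq k'$ and the empty set otherwise. By construction, $R^\star$ is distributed exactly as the $y$-random set $R((1-\epsilon) y)$, and the truncation guarantees strict feasibility $|S| + |R^\star| \leq k$.

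To bound the expected value I would proceed in two steps. First, since $f$ is monotone submodular, its multilinear extension $F$ is concave along every non-negative ray, which gives $F((1-\epsilon)y) \geq (1-\epsilon) F(y) = (1-\epsilon) F(T)$. Second, the deterministic truncation rule is a monotone $(1,c)$-balanced contention resolution scheme for the cardinality-$k'$ polytope: for every element $i$, $\Pr[i \in \text{output} \mid i \in R^\star] = \Pr[|R^\star| \leq k' \mid i \in R^\star]$. Since $\mathbb{E}[|R^\star| \mid i \in R^\star] \leq 1 + (1-\epsilon)k'$ and the hard threshold is $k'$, a multiplicative Chernoff bound gives tail probability at most $\exp(-\Omega(\epsilon^2 k'))$, which under the hypothesis $k' > \epsilon^{-4}$ is $\exp(-\Omega(\epsilon^{-2})) \leq \epsilon$ for $\epsilon \in (0,1/5)$. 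Invoking the CRS guarantee of~\cite{CVZ11} for monotone submodular functions then yields $\mathbb{E}[f(\text{output})] \geq c \cdot F((1-\epsilon) y) \geq (1-\epsilon)^2 F(T) \geq (1-2\epsilon) F(T)$, as required.

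The main delicate point is balancing the subsampling slack against the Chernoff concentration. Rescaling $y$ to $(1-\epsilon) y$ costs one factor of $(1-\epsilon)$ but is precisely what creates the $\epsilon k'$ gap between the expected size of $R^\star$ and the hard threshold, and the hypothesis $k' > \epsilon^{-4}$ is what makes this gap large in Chernoff units; working out the constants to verify that $\exp(-\Omega(\epsilon^{-2})) \leq \epsilon$ throughout the range $\epsilon \in (0, 1/5)$ is routine. Monotonicity of the truncation scheme (adding elements only makes $|R^\star|$ larger and can only trigger the drop), the concavity-along-rays property of $F$, and the feasibility of the constructed adaptive policy are standard facts that slot in at the appropriate steps.
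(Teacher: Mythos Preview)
Your proposal is correct and takes essentially the same approach as the paper: the same subsample-and-truncate adaptive policy, the same Chernoff bound on the truncation event, and the same appeal to the monotone contention-resolution machinery of~\cite{CVZ11}. The only cosmetic difference is bookkeeping: the paper treats subsample-plus-truncate as a single monotone $(1-2\epsilon)$-balanced CRS with respect to the realization $T\cap R$ (marginals $y$), whereas you split the two $(1-\epsilon)$ factors, obtaining one from concavity of $F$ along rays and the other from viewing truncation alone as a $(1-\epsilon)$-balanced CRS with respect to $R^\star$ (marginals $(1-\epsilon)y$); both routes yield $(1-\epsilon)^2\ge 1-2\epsilon$.
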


\begin{proof}
Given a solution $(S,T)$ consider an adaptive policy which seeds the same set $S$ in the first stage; in the second stage, for every realization $R$ of neighbors of $S$ the policy selects each node $j\in R\cap T$ with probability $(1-\epsilon)$ into a set $\hat{T}$ and seeds the nodes in $\hat{T}$ if $|\hat{T}|>\xi=k-|S|$ and otherwise does not seed any nodes.
We next compute the probability of any element $j$ to be seeded given that it is in $\hat{T}$.
\begin{align*}
\Pr\left [j \text{ is seeded} \ | \ j\in\hat{T}\right ]&=\Pr\left [|\hat{T}|\leq \xi \ | \ j\in\hat{T}\right ]\\
&=\Pr\left [|\hat{T}\setminus \{j\}|\leq \xi-1\right ]\\
&=1-\Pr\left [|\hat{T}\setminus \{j\}|> \xi-1\right ]\\
&\geq 1-\Pr\left[|\hat{T}\setminus \{j\}|> \left ( \frac{\xi-1}{(1-\epsilon)(\xi-\myvec{p}_j)} \right ) \E\left [|\hat{T}\setminus\{j\}| \right ]\right ]\\
&\geq 1-\exp\left(-\frac{(1-\epsilon)(\xi-\myvec{p}_j)}{3}(\frac{\xi-1}{(1-\epsilon)(\xi-\myvec{p}_j)}-1)^2\right)\\
%&= 1-\exp(-\frac{1}{3}(\frac{(\xi-1-(1-\epsilon)(\xi-\myvec{p}_j))^2}{(1-\epsilon)(\xi-\myvec{p}_j)}))\\
%&\geq 1-\exp(-\frac{1}{3}(\frac{(\xi-1-(1-\epsilon)\xi)^2}{(1-\epsilon)\xi}))\\
%&=1-\exp(-\frac{1}{3}(\frac{(\epsilon \xi-1)^2}{(1-\epsilon)\xi}))\\
&\geq 1-\exp\left (-\frac{\epsilon^2 \xi-2}{3}\right)\\
%&\geq 1-\exp\big(-\frac{\epsilon^{-2}-2}{3}\big) \\
&\geq 1-\exp\left(-\epsilon^{-1}\right)\\
&\geq 1-\epsilon
\end{align*}
We derive the first inequality from 
\mbox{$(1-\epsilon)(\xi-\myvec{p}_j)\geq \E[|\hat{T}\setminus\{j\}|]$}.
Notice that since $\epsilon <0.5$, we have that 
\mbox{$\frac{\xi-1}{(1-\epsilon)(\xi-\myvec{p}_j)}\in(1,2)$},
and thus the second inequality follows form Chernoff bound (See appendix for the exact bound used). The third inequality follows from simple arithmetic derivations. The fourth is from substituting $\xi$ with $\epsilon^{-4}$ and using the fact that $\epsilon<1/5$.

We therefore have that the probability $j$ is seeded given that it is was realized (in $R$) is at least $(1-\epsilon)^2\geq (1-2\epsilon)$.  
We also have that the seeded set is always of size at most $\xi$. 
In addition for any element $j$ and any two realizations $R_1$ and $R_2$ such that $j\in R_1\subseteq R_2$ we have that the probability $j$ is seeded when the realization is $R_1$ is higher than if the realization is $R_2$.
These three condition define a monotone $(1-2\epsilon)$-balanced \emph{contention resolution scheme}~\cite{CVZ11} and thus using the results from~\cite{CVZ11} we get that the expected value of this process (and thus of the adaptive policy) is at least \mbox{$(1-2\epsilon)F(T)$}.
\end{proof}

Combining these results we can prove the following theorem:
\else
%We then show that by combining
Together we get:
\fi
%
%these results we can prove the following theorem:

\begin{theorem}\label{thm:NAtoA}
For every $\epsilon>0$, given an algorithm that finds a \textbf{non-adaptive} policy with value at least $\gamma \texttt{OPT}_{NA}$, there is a $ (1-1/e)\gamma - \epsilon $ approximation algorithm for the optimal \textbf{adaptive} policy. 
\end{theorem}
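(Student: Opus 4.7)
The plan is simply to chain together the two lemmas from the excerpt. Given an algorithm that produces a non-adaptive policy $(S,T)$ with value $F(T) \geq \gamma \, \texttt{OPT}_{NA}$, I will first invoke Lemma~\ref{lemma:naToA} to obtain
\[
F(T) \;\geq\; \gamma \,\texttt{OPT}_{NA} \;\geq\; \gamma\bigl(1-1/e-2/k\bigr)\texttt{OPT}_{A},
\]
and then use the contention-resolution construction of Lemma~\ref{lemma:nonAdaptToAdapt} to actually realize $(S,T)$ as an adaptive policy whose value is at least $(1-2\epsilon')\,F(T)$. Composing these two estimates yields an adaptive policy of value at least $\gamma(1-1/e-2/k)(1-2\epsilon')\,\texttt{OPT}_{A}$, and for $\epsilon' := \epsilon/4$ together with $k$ larger than some $k_0 = \mathrm{poly}(1/\epsilon)$ this expression is at least $((1-1/e)\gamma-\epsilon)\,\texttt{OPT}_{A}$.

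The only subtle point is that Lemma~\ref{lemma:nonAdaptToAdapt} requires the residual budget $k-|S|$ to exceed $\epsilon'^{-4}$, which may fail in two regimes. The first regime is $k \leq k_0$: here $k$ is a constant depending only on $\epsilon$, so the optimal adaptive policy itself can be computed by brute force over all $S \subseteq X$ of size at most $k$ and, for each realization, all $T_i \subseteq \mathcal{N}(S)\cap R_i$ of size at most $k-|S|$; this runs in polynomial time. The second regime is when $k$ is large but the specific policy returned has $k-|S| \leq \epsilon'^{-4}$; in that case one can either (i) replace $(S,T)$ by a truncated first-stage set $S' \subseteq S$ with $|S'| = k-\epsilon'^{-4}$ and then re-optimize the second stage, or (ii) simply execute $(S,T)$ deterministically while enumerating the small number of possible second-stage sets, since $\mathcal{C}(T)\leq \epsilon'^{-4}$ forces the expected seeded set to be of constant size. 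Either modification costs only an additive $O(\epsilon'^{-4}/k)$ fraction of the value, which is absorbed into the $\epsilon$ slack by taking $k_0$ large enough.

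The hard part is not really in this composition — the two lemmas do all the work — but in making sure the error accounting is clean: the $2/k$ slack from Lemma~\ref{lemma:naToA}, the $2\epsilon'$ slack from Lemma~\ref{lemma:nonAdaptToAdapt}, and the $\epsilon'^{-4}/k$ slack from handling the boundary case must all simultaneously be driven below $\epsilon$ by one choice of $\epsilon'$ and one threshold $k_0$. This is straightforward after setting, say, $\epsilon' = \epsilon/4$ and $k_0 = \Theta(\epsilon^{-5})$. With these choices, the algorithm is: if $k \leq k_0$, output the brute-force optimum; otherwise, run the $\gamma$-approximation for $\texttt{OPT}_{NA}$ and execute the resulting non-adaptive policy through the contention-resolution scheme of Lemma~\ref{lemma:nonAdaptToAdapt}. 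By construction the resulting adaptive policy achieves value at least $((1-1/e)\gamma - \epsilon)\,\texttt{OPT}_{A}$, proving the theorem.
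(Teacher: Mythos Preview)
Your proposal is correct and follows essentially the same approach as the paper: chain Lemma~\ref{lemma:naToA} with Lemma~\ref{lemma:nonAdaptToAdapt}, brute-force when $k$ is below a constant threshold, and when $k$ is large but $k-|S|$ is too small, delete a constant number of first-stage nodes to free up second-stage budget. The paper implements your option (i) by removing the $\lceil (4/\epsilon)^4 \rceil$ elements of $S$ of least marginal contribution (together with the children in $T$ that lose all their parents), which is what gives the $O(\epsilon'^{-4}/|S|)$ loss bound via submodularity; your phrase ``re-optimize the second stage'' is vaguer than necessary, and you should specify which elements of $S$ are dropped. Two small imprecisions worth fixing: your option (ii) does not work as written, since $\mathcal{C}(T)\leq \epsilon'^{-4}$ bounds only the \emph{expected} realized size, while $|T|$ itself can be unbounded, so there is no small set of second-stage outcomes to enumerate; and in the brute-force regime you cannot iterate over all realizations (there are exponentially many), so one must sample realizations and optimize the second stage on each sample, as the paper notes.
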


\ifFULL
\begin{proof}
 
Run the non-adaptive algorithm to get a non-adaptive policy $(S,T)$, with an approximation of $\gamma >0$ for $\texttt{OPT}_{NA}$.
%for any given $\epsilon>0$ we take $\delta$ s.t. $(\delta-\delta^2)2\gamma/3<\epsilon$, and $c=\frac{2-3\ln(\delta)}{\delta^2}$. 
First, assume that both $k>4/\epsilon$ and $k-|S|>(4/\epsilon)^4$. 
We can then use the same adaptive policy as in Lemma~\ref{lemma:nonAdaptToAdapt} with parameter $\epsilon/4$ and let $Adapt(S,T)$ be the value of that policy.
  
\begin{align*}
Adapt(S,T) 
%&\geq \big((1-\delta)( 1-e^{-\frac{\delta^2 t-2}{3}}\big)F(T)\\
&\geq \left(1-\epsilon/2\right)F(T)\\
&\geq \left(1-\epsilon/2\right)\gamma \texttt{OPT}_{NA}\\
&\geq  \left ((1-\epsilon/2) (1-1/e-\epsilon/2)\right)\gamma\texttt{OPT}_{A}\\
&\geq  \left ( (1-1/e)\gamma - \epsilon \right )\texttt{OPT}_{A}\\
\end{align*}
where the first inequality is due to Lemma \ref{lemma:nonAdaptToAdapt} and the third is due to Lemma~\ref{lemma:naToA}. 

If $k-|S|<c=(4/\epsilon)^4$ we can iteratively remove from $S$ the \mbox{$\lceil c \rceil$} elements that contribute the least to the value of the solution, as well as the elements of $T$ that are connected only to the removed elements. 
Let $S',T'$ be the result of this procedure. Notice that $k-|S'| >c$. As we removed the elements with the least value we know that $F(T')\geq(1-\frac{\lceil c \rceil}{|S|})F(T)$, and thus by the same argument as in the previous case we get that $Adapt(S',T')\geq (1-\frac{\lceil c \rceil}{|S|}) Adapt(S,T)$. It is easy to check that if $k>O(\frac{\lceil c \rceil}{\epsilon})$ we still get the desired approximation ratio.

If $k$ does no satisfy one of the conditions above (so smaller then some constant) we can find an optimal adaptive policy by a brute force search over sets of size at most $k$ in the first stage (we can approximate their value to any desired accuracy by sampling realization and finding the optimal second stage set).
\end{proof}
\fi

%\begin{lemma}
%For any fixed $\epsilon>0$, no non-adaptive policy can obtain value better than $1-1/e+\epsilon$ of the optimal adaptive policy, even for unit-demand functions.
%\end{lemma}
%\end{restatable}

%The proof builds on results from~\cite{CCPV07} and~\cite{CVZ11}. 
 
%\paragraph{A tight $1-1/e$ adaptivity gap.}  We complement the above example by showing that the adaptivity gap is at most $1-1/e$.  Let $\texttt{OPT}_{A}$ and $\texttt{OPT}_{NA}$ denote the values of the optimal solution to the adaptive and non-adaptive respectively.  In particular we show that any non-adaptive policy can be converted to an adaptive one albeit losing a factor of at most $1-1/e$ in its approximation guarantee.  

%\begin{restatable}{thm}%\label{thm:rr2adapt}

\subsection{Optimization via Non-Adaptive Policies}\label{apx-rr}
Given the blackbox reduction in Theorem~\ref{thm:NAtoA}, one can consider the problem of designing algorithms for non-adaptive policies.  %Could we directly optimize over non-adaptive policies?
%In this section we develop a natural approach that succeeds in recovering the $(1-1/e)^2$ 
%guarantee for the important special case of matroid rank sum objectives (these include cover functions).
%However, we prove that this framework does not extend to general submodular functions.
%In particular, obtaining the $(1-1/e)^2$ guarantee is computationally equivalent to finding a planted clique in a random graph.
%
%
%In~\cite{knapsack_paper} we present a non-adaptive algorithm for the more general case of knapsack constraints, called \textsc{DensestAscent}.  
\ifFULL
We now describe the simple greedy algorithm \textsc{NonAdaptiveGreedy}
\else
In the full version we formally describe the simple greedy algorithm \textsc{NonAdaptiveGreedy}
\fi
which is similar to the one sketched in the Introduction:
at each step, as long as it doesn't exceed the total budget, the algorithm adds the densest $\epsilon$-block.  For non-adaptive policies, an $\epsilon$-block is a node $x \in X$ and a subset of its neighbors whose \emph{expected} cardinality is at most $1/\epsilon$.  
\ifFULL
A formal description of the algorithms follows in Algorithm~\ref{alg:DensestAscent}. It assume a black-box access to an algorithm that finds an approximate optimal $\epsilon$-block called \textsc{FindOptimalNonAdaptiveBlock}.  %We will discuss different approaches that implement this procedure.  

\begin{algorithm}
\caption{\textsc{NonAdaptiveGreedy}}
\label{alg:DensestAscent}
\begin{algorithmic}[1]
\INPUT $f:2^{\mathcal{N}(X)}\rightarrow \RR_+$, budget $k$.
%\OUTPUT A set $S \subseteq E$ satisfying $S\subseteq \mathcal{I}$.
\STATE $S \leftarrow \emptyset$, $\myvec{q}\leftarrow\overrightarrow{0}$.
\WHILE {$|S| + \sum_{j\in T}p_{j}  \leq k -\frac{3}{\epsilon}$}
\STATE $(x,B) \leftarrow \textsc{FindOptimalNonAdaptiveBlock}(S,T)$
\STATE $(S,T)=(S\cup x, T\cup B)$
\ENDWHILE
\RETURN $(S,T)$
\end{algorithmic}
\end{algorithm}
The next lemma shows that for any $\alpha>0$, a procedure which guarantees an $\alpha$-approximation for the optimal $\epsilon$-block translates to a $(1-1/e^{\alpha}-\epsilon)$-approximation guarantee for the optimal non-adaptive policy.  

\begin{lemma}\label{lem:ascent}
$\forall \epsilon>0$, assume that in every iteration \textsc{FindOptimalNonAdaptiveBlock} returns a block which is an $\alpha$-approximation to the optimal $\epsilon$-block.  
Then, when $k=\Omega(1/\epsilon^2)$
%\footnote{Note that when $k$ is smaller than $\Omega(1/\epsilon^2)$, i.e. constant, we can find an $\left(1-1/e \right)$-approximation simply by running a greedy search for each choice of $S\subseteq X$ of size $\leq k$.} 
Algorithm \textsc{NonAdaptive} returns a solution $(S,T)$ such that $F(T)\geq \left (1-1/e^{\alpha}-O(\epsilon)\right)\texttt{OPT}_{NA}$.
\end{lemma}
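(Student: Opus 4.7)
The plan is to adapt the standard density-based analysis of the greedy algorithm for submodular maximization under cardinality/knapsack constraints, with the $\epsilon$-block approximation argument (already sketched in the non-stochastic warm-up) playing the role of the ``densest element'' step.

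First I would set up notation: let $(S^*,T^*)$ denote an optimal non-adaptive solution, with value $\texttt{OPT}_{NA} = F(T^*)$ and cost $|S^*| + \mathcal{C}(T^*) \le k$. Let $(S_\ell, T_\ell)$ denote the partial solution after $\ell$ iterations of the algorithm, with $(S_0,T_0)=(\emptyset,\emptyset)$. Let $c_\ell$ be the cost of the block added in iteration $\ell$, so $c_\ell = 1 + \mathcal{C}(B_\ell)$. The key structural claim I need is the stochastic analogue of the bound in the warm-up: for any partial solution $(S,T)$ with residual budget $k' = k - |S| - \mathcal{C}(T)$, the densest $\epsilon$-block with respect to $T$ has marginal density at least
\[
\left(1-O(\epsilon)\right)\frac{F(T^*\cup T) - F(T)}{k'}.
\]
This follows by decomposing $(S^*,T^*)$ into blocks (each parent together with its contributing children in $T^*$), observing by submodularity and averaging that at least one such block has density at least $(F(T^*\cup T)-F(T))/k'$ against the current $T$, and then truncating that block down to expected size $\le 1/\epsilon$ at a $(1-\epsilon)$ loss exactly as in the non-stochastic warm-up (the same calculation goes through when ``size'' is replaced by ``expected size'' because $F$ is a nonnegative combination of submodular set functions and expectation is linear in the indicator probabilities).

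Next I would combine this with the $\alpha$-approximation guarantee of \textsc{FindOptimalNonAdaptiveBlock} to obtain the per-step recursion
\[
F(T_\ell) - F(T_{\ell-1}) \;\ge\; \alpha(1-O(\epsilon))\,\frac{c_\ell}{k}\,\bigl(\texttt{OPT}_{NA} - F(T_{\ell-1})\bigr),
\]
where I also use $F(T^*\cup T_{\ell-1}) \ge F(T^*) = \texttt{OPT}_{NA}$ by monotonicity, and $k' \le k$. Unrolling this recursion in the standard way (taking a product of the $(1 - \alpha(1-O(\epsilon))c_\ell/k)$ factors, and using $1-x \le e^{-x}$) yields
\[
\texttt{OPT}_{NA} - F(T_L) \;\le\; \exp\!\Bigl(-\alpha(1-O(\epsilon))\,\tfrac{1}{k}\sum_\ell c_\ell\Bigr)\cdot \texttt{OPT}_{NA}.
\]

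The remaining obstacle, which is where the $k=\Omega(1/\epsilon^2)$ hypothesis and the $3/\epsilon$ slack in the termination condition enter, is to show $\sum_\ell c_\ell \ge (1-O(\epsilon))k$, so the exponent becomes $-\alpha(1-O(\epsilon))$ and the bound reduces to $(1-1/e^\alpha - O(\epsilon))\texttt{OPT}_{NA}$. The issue is that the algorithm halts as soon as $|S_\ell|+\mathcal{C}(T_\ell) > k - 3/\epsilon$, so the accumulated cost could be as low as $k - 3/\epsilon - c_{\max}$, where $c_{\max} \le 1 + 1/\epsilon$ is the maximum cost of a single $\epsilon$-block. Thus the total cost is at least $k - O(1/\epsilon)$, which is $(1-O(1/(\epsilon k)))k = (1-O(\epsilon))k$ under the assumption $k = \Omega(1/\epsilon^2)$. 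Plugging this into the exponential bound and absorbing lower-order terms gives $F(T_L) \ge (1-e^{-\alpha}-O(\epsilon))\texttt{OPT}_{NA}$, as desired.

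I expect the main obstacle to be the first step, i.e.\ establishing that truncating to an $\epsilon$-block of expected size $\le 1/\epsilon$ only loses a $(1-\epsilon)$ factor in density when the underlying function is the stochastic $F(\cdot)$ rather than a fixed submodular set function. The proof is essentially the identity in the warm-up, but one has to argue it carefully by conditioning on which children of the chosen parent actually realize, writing $F(T)$ as $\sum_i p(R_i) f(T\cap R_i)$, and applying submodularity realization-by-realization; then the argument that the top $1/\epsilon$ items (ranked by expected marginal contribution) are a $(1-\epsilon)$-approximation to the best density carries over verbatim.
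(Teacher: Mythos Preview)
Your proposal is correct and follows essentially the same route as the paper: decompose the optimal non-adaptive solution into per-parent blocks, show the densest $\epsilon$-block is a $(1-O(\epsilon))$-approximation of the densest such block, combine with the $\alpha$-guarantee to get the multiplicative recursion, and close by showing the accumulated cost is at least $(1-O(\epsilon))k$ under the termination condition and $k=\Omega(1/\epsilon^2)$.

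One slip worth fixing: in your density lower bound you put the \emph{residual} budget $k'$ in the denominator, but the averaging argument is over the blocks of $(S^*,T^*)$, whose total cost is at most $k$, so the correct denominator is $k$ from the start; the intermediate claim with $k'$ is actually too strong (it can fail when $k'$ is small), and your appeal to ``$k'\le k$'' is in the wrong direction to repair it. Replacing $k'$ by $k$ throughout that step fixes the argument and matches the paper.
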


\begin{proof}
%We analyze the algorithm until the first iteration in which the remaining budget is less than $\sqrt{k}$.
%For an iteration $i$ of the algorithm where $(S,T)$ is the current solution, 
%let $(x_i,\myvec{o}_i)$ be the densest contribution.
%In particular, the marginal density of  $(x_i,\myvec{o}_i)$ 
%is at least that of the optimal solution.
%
%Let $({x^{'}_i},\myvec{o}^{'}_i)$ be the densest contribution of cost at most $\sqrt{K}$.
%By submodularity, the marginal density of $({x^{'}_i},\myvec{o}^{'}_i)$ is at least 
%$(1-1/\sqrt{K})$ times the marginal density of $({x_i},\myvec{o}_i)$.
%Furthermore, $({x^{'}_i},\myvec{o}^{'}_i)$ is feasible.

Let $(S_j,T_j)$ be the solution at the beginning of iteration $j$. 
For a block $(x,B)$ let ${F_{T_j}(B) = \sum_{i\in [m]}p(R_i)(f((T_j\cup B) \cap R_i)-f(T_j \cap R_i))}$.
Let $(x_{j},B_{j})$ be the solution returned by \textsc{FindOptimalNonAdaptiveBlock} at iteration $j$ and let $(x_O,B_O)$ be the optimal $\epsilon$-block in this iteration. 
First we observe that similarly to the non-stochastic case the optimal $\epsilon$-block is a $(1-2\epsilon)$-approximation of the densest block. Consider an iteration $j$. For any block $(x,B)$ we have that if $C(B)<1/\epsilon$ than the optimal $\epsilon$-block has at least the same marginal density. Otherwise, let $B_\epsilon$ be the set of elements of $B$ of highest marginal value of cost at most $1/\epsilon$.

$$\frac{F_{T_j}(B_O)}{1+C(B_O)}\geq  \frac{F_{T_j}(B_\epsilon)}{1+C(B_\epsilon)}
\geq   \frac{\frac{1/\epsilon-1}{ C(B)}F_{T_j}(B)}{1+1/\epsilon}\geq\frac{1-\epsilon}{1+\epsilon}\frac{F_{T_j}(B)}{C(B)}\geq (1-2\epsilon)\frac{F_{T_j}(B)}{1+C(B)}$$
where the first inequality is because $B_O$ is the optimal $\epsilon$-block and $B_\epsilon$ is a candidate, the second is because $B_\epsilon$ is at least of size $1/\epsilon-1$ and from the submodularity of the function, and the other steps are just simple algebra.

We can think of the optimal non-adaptive solution as a set $O\subseteq X$ and arbitrarily partition the nodes in $\mathcal{N}(O)$ such that for each node $o_{\ell} \in O$ we associate a set of children $O_{\ell} \subseteq \mathcal{N}(o_{\ell})$.  The cost associates with each node and its children is simply $1+\mathcal{C}(O_{\ell})$. Thus, we have that:
$$ \frac{F_{T_j}(B_j)}{1+\mathcal{C}(B_j)} \geq \alpha\frac{F_{T_j}(B_O)}{1+\mathcal{C}(B_O)}\geq \alpha\left ( 1-2\epsilon \right )\max_\ell\frac{F_{T_j}(O_\ell)}{1+\mathcal{C}(O_\ell)}\geq
\alpha\left ( 1-2\epsilon \right )\frac{\sum_\ell F_{T_j}(O_\ell)}{\sum_\ell 1+\mathcal{C}(O_\ell)}\geq
\alpha\left ( 1-2\epsilon \right )\frac{F_{T_j}(O)}{k}$$
where the second inequality is from the last equation, and the last is from submodularity.

We proceed by induction to show that at any iteration we have that 

$$F(T_{j+1})\geq \left (1- \prod_{\ell=1}^j \left (1- \alpha\left ( 1-2\epsilon \right )\frac{1+\mathcal{C}(B_\ell)}{k} \right )\right )\texttt{OPT}_{NA}$$

The base case is trivial. Now assume it is true for $F(T_j)$. Then we have that
\begin{align*}
F(T_{j+1})\geq & F_{T_j}+\alpha\left ( 1-2\epsilon \right )\frac{1+\mathcal{C}(B_j)}{k}(\texttt{OPT}_{NA}-F({T_j}))\\
= & \left  (1-\alpha\left  ( 1-2\epsilon \right )\frac{1+\mathcal{C}(B_j)}{k} \right) F({T_j})+\alpha\left  ( 1-2\epsilon \right )\frac{1+\mathcal{C}(B_j)}{k} \texttt{OPT}_{NA}\\
\geq & \left  (1-\alpha\left ( 1-2\epsilon \right )\frac{1+\mathcal{C}(B_j)}{k} \right) \left (1- \prod_{\ell=1}^{j-1} \left  (1- \alpha\left  ( 1-2\epsilon \right )\frac{1+\mathcal{C}(B_\ell)}{k} \right )\right )\texttt{OPT}_{NA}\\
&+\alpha\left  ( 1-2\epsilon \right )\frac{1+\mathcal{C}(B_j)}{k}\texttt{OPT}_{NA}\\
= & \left (1- \prod_{\ell=1}^j \left  (1- \alpha\left  ( 1-2\epsilon \right )\frac{1+\mathcal{C}(B_\ell)}{k} \right )\right )\texttt{OPT}_{NA}
\end{align*}
where the first inequality is from the previous derivation and the second inequality is by the induction hypothesis.

After the last iteration the cost of the solution is greater than $k-\frac{3}{\epsilon} > \left(1- 3\epsilon \right)k$.
Therefore, 
\begin{align*}
F(T)\geq &  \left (1- \prod_{\ell=1}^t \left (1-\alpha ( 1-2\epsilon )\frac{1+\mathcal{C}(B_\ell)}{k} \right )\right )OPT_{NA}\\
%= &  \left (1- \prod_{j=1}^t \big (1- \frac{k-\frac{1}{\epsilon^2}-\frac{2}{\epsilon}}{k}\frac{|S_j|+k_j}{k-\frac{1}{\epsilon^2}-\frac{2}{\epsilon}} \big )\right )OPT_{\epsilon,k}\\
\geq &\left (1- (1-  \frac{\alpha( 1-2\epsilon)\left(1- 3\epsilon \right)}{t})^{t}\right )\texttt{OPT}_{NA}\\
\geq & \left (1-\frac{1}{e^{ \alpha( 1-5\epsilon  )}}\right)\texttt{OPT}_{NA}\\
\geq &  \left (1-\frac{1}{e^{\alpha}}-O(\epsilon)\right)\texttt{OPT}_{NA}
\end{align*}
Where the second inequality is because setting all of the $C(B_j)$ to be equal minimizes the function and we know that their sum is at least $\left(1- 3\epsilon \right)k$.
\end{proof}
\else
For any submodular function this algorithm has an approximation ratio of $1-1/e^{\alpha} - \epsilon$, where $\alpha$ is the approximation guarantee of the procedure which finds the optimal $\epsilon$-block.  The problem therefore reduces to computing the best approximation for the densest $\epsilon$-block.\newline
\fi
%computes the marginal densities.  

%For knapsack constraints there is an algorithm for computing densities which has an approximation ratio of $\alpha=1-1/e$, and a non-trivial construction shows that this is tight.  This algorithm can therefore be used to immediately give an approximation of $(1-1/e)(1-1/e^{1-1/e}) - \epsilon$ for the optimal adaptive solution.  
\ifFULL
\subsection{Finding optimal non-adaptive $\epsilon$-block}

Lemma~\ref{lem:ascent} shows that finding good approximation for non-adaptive policies reduces to computing approximations for the optimal $\epsilon$-block. In this section we show that for some special cases we can find an optimal $\epsilon$-block and thus a $(1-1/e)$ approximation for the optimal non-adaptive policy in polynomial time. Unfortunately, for general submodular functions we show it is unlikely that it can be approximated arbitrary well.  

\subsubsection{Approximating optimal $\epsilon$-block for large probabilities} 
In the first special case we consider, all the probabilities on nodes are larger than some constant.
Here, the optimization problem is easy.  
Given some constant $\epsilon>0$ the algorithm simply enumerates over all $x\in X$ and over all possible subsets of items $T\in\mathcal{N}(x)$ s.t. $C(S) \leq 1/\epsilon$.

\begin{cor}\label{thm:rnr-constant}
Let $\delta=\min_{i \in [n]} p_{i}$. 
Then for any constant $\epsilon > 0$, 
we can approximate the optimal non-adaptive policy to within $(1-1/e-\epsilon)$ 
in time $\poly(n^{1/\delta})$.
\end{cor}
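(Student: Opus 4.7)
The strategy is to invoke Lemma~\ref{lem:ascent} with $\alpha = 1$: if we can compute the optimal $\epsilon'$-block exactly for a sufficiently small constant $\epsilon' = \Theta(\epsilon)$, then running \textsc{NonAdaptiveGreedy} with that subroutine yields a $(1-1/e - O(\epsilon'))$-approximation to the optimal non-adaptive policy, which is $(1-1/e-\epsilon)$ after rescaling $\epsilon'$. So it suffices to show that, under the assumption that every probability is at least $\delta$, the optimal $\epsilon'$-block can be found in time $\poly(n^{1/\delta})$ (treating $\epsilon$, and hence $\epsilon'$, as constants).

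The first step is to observe that the expected-cost constraint on a block bounds its cardinality. An $\epsilon'$-block is a pair $(x, B)$ with $x \in X$ and $B \subseteq \mathcal{N}(x)$ satisfying $\mathcal{C}(B) = \sum_{j \in B} p_j \leq 1/\epsilon'$; since each $p_j \geq \delta$, this forces $|B| \leq 1/(\epsilon' \delta)$. Consequently, the total number of candidate blocks is at most $|X| \cdot \binom{|\mathcal{N}(X)|}{\leq 1/(\epsilon'\delta)} = O(n^{1 + 1/(\epsilon' \delta)}) = \poly(n^{1/\delta})$. The algorithm simply enumerates all such candidates, evaluates the marginal density $F_{T_j}(B)/(1+\mathcal{C}(B))$ of each with respect to the current partial solution $T_j$, and returns the best one.

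The only remaining technical point is evaluating $F_{T_j}(B) = \mathbb{E}_R\bigl[f((T_j \cup B) \cap R) - f(T_j \cap R)\bigr]$, since the expectation is over the $2^{|\mathcal{N}(S)|}$ possible realizations. Because $f$ is bounded and $R$ is a product distribution, a Hoeffding/Chernoff bound shows that a polynomial number of samples suffices to estimate each $F_{T_j}(B)$ to within a $(1 \pm \epsilon')$ multiplicative factor with high probability; standard boosting via a union bound over the $\poly(n^{1/\delta})$ candidate blocks and the at most $k$ iterations absorbs the failure probability and the estimation error into the overall $O(\epsilon)$ slack permitted by Lemma~\ref{lem:ascent}. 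This is the only genuinely delicate step, but it is routine and does not change the asymptotic running time.

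Putting these pieces together, the combined procedure runs in $\poly(n^{1/\delta})$ time (for constant $\epsilon$) and outputs a non-adaptive policy whose value is within a $(1-1/e-\epsilon)$ factor of $\texttt{OPT}_{NA}$, as claimed.
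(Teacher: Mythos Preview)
Your proposal is correct and follows essentially the same approach as the paper: the paper's proof of this corollary is a single sentence noting that one enumerates over all $x\in X$ and all subsets $B\subseteq\mathcal{N}(x)$ with $\mathcal{C}(B)\leq 1/\epsilon$, and then implicitly invokes Lemma~\ref{lem:ascent}. You have simply spelled out the cardinality bound $|B|\leq 1/(\epsilon'\delta)$ (which is the key use of the hypothesis $p_j\geq\delta$), the resulting running-time calculation, and the standard sampling argument for estimating $F_{T_j}(B)$---none of which the paper makes explicit, but all of which are the intended details.
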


In the rest of this section we turn to the more challenging task of seeding nodes with small probabilities.

\else
\noindent\textbf{Approximations for densest $\epsilon$-blocks.}  In case the probabilities are larger than some constant, finding dense $\epsilon$-blocks is easy. Given some fixed $\epsilon>0$ we simply enumerate over all possible subsets of items whose expected cardinality is at most $1/\epsilon$.
\fi
\ifFULL
\subsubsection{Approximating the optimal $\epsilon$-block for MRS objective}
\fi
In case the probabilities are small enumerating over all possible solutions is computationally infeasible.  The problem of finding $\epsilon$-blocks reduces to the following fundamental problem.
\ifFULL
\begin{definition}
\textsc{Submodular-optimization-with-small-probabilities-$\delta$} (SOSP-$\delta$) problem:
We are given a monotone submodular function $f$ and probabilities of each element realizing $\myvec{p}$.
The probabilities satisfy $\max_i \myvec{p}_i \leq \delta$.
Our goal is to find a set $T$ of expected size  $k$ that maximizes the expected value of $f$. 
\end{definition}

We'll look at the fractional version of this problem in which the items of $T$ are chosen independently with probabilities $\myvec{q}$ (this can be thought of as a multi-linear relaxation~\cite{CCPV07} of SOSP). Since we are only interested in small values of $\delta$ it is easy to round fractional solutions using the \emph{pipage rounding}~\cite{AS04} technique with very small loss. Formally, we want to solve

\begin{align*}
\max_{\myvec{q}} \; \; \;& \E\left[f\left(T\right)\right]
         =\sum_{T}\left( \prod_{i\in T}\myvec{q}_{i} \prod_{i\notin T}(1-\myvec{q}_{i})\right)f\left(T\right)\\
\text{s.t. } \; \; \; &\sum_{i}\myvec{q}_{i}\leq k\\
 &\myvec{q}_i \in [0,\myvec{p}_i] \ \ \ \forall i\in [n]
\end{align*}

\else
\newline

\noindent{\textit{\textbf{Submodular-optimization-with-small-probabilities (SOSP):} Given a ground set of elements each appearing with probability at most $\delta>0$ find a set of expected size at most $k$ that has the highest expected value over all realizations of the elements.
%what is the best approximation possible for maximizing a submodular 
%function in expectation under a cardinality constraint?
}}\newline
\fi

%In general, the probabilities may be arbitrarily small, and the difficulty of the problem stems exactly from such cases.
%If we consider cover functions where every node $i \in \mathcal{N}(X)$ represents a set of elements, when all probabilities $p_i$ are arbitrarily small, this implies that the likelihood of two sets appearing together is small. 
At a first glance, it may seem like no algorithm should be able to get an approximation better than $1-1/e$ for this problem: when $\delta=1$ the problem identifies with submodular maximization under a cardinality constraint, and due to Feige we know that no algorithm can do better than $1-1/e$ unless P=NP even for coverage functions~\cite{F98}%
\ifFULL
(even for the fractional version)%
\else
\fi
.  It seems like shrinking the constraint polytope by a factor of $\delta$ should not make a difference in the optimization.  
\ifFULL
Surprisingly, as we next show, for submodular functions in a class known as \emph{matroid rank sum} (which includes coverage functions), the above optimization problem can be solved nearly optimally.  At a high level, we show that for such functions the problem can be well approximated through a convex program, which then enables us to produce a solution whose approximation becomes optimal as $\delta$ vanishes.~\footnote{Note that this is not due to the low cost of elements which allows for example for a greedy algorithm to be nearly optimal for the knapsack problem.}

\begin{theorem}\label{lemma:MSRsmall}
Suppose that $f$ can be represented as a matroid rank sum (MRS) function. 
Then, there exists a $\left(1-\delta/2\right)$-approximation algorithm for {\sc SOSP}-$\delta$ using convex programming.
\end{theorem}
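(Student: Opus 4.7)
The plan is to approximate the (non-concave) multilinear objective $F(\myvec{q})$ by a concave underestimator $\tilde F(\myvec{q})\le F(\myvec{q})$ that stays within a $(1-\delta/2)$ factor of $F$ pointwise on the feasible region, solve the resulting concave maximization in polynomial time, and round the fractional optimum to an integer one with negligible loss. The basis for this is the MRS decomposition $f=\sum_j r_j$, which splits $F(\myvec{q})=\sum_j F_{r_j}(\myvec{q})$ and lets me build $\tilde F_{r_j}$ one matroid-rank summand at a time.

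For a coverage term (a rank-one matroid), $F_{r_j}(\myvec{q})=1-\prod_i(1-q_i)$ and the natural concave underestimator is $\tilde F_{r_j}(\myvec{q})=1-\exp(-\sum_i q_i)$; it is concave as a composition of the concave univariate $1-e^{-s}$ with an affine map and lower-bounds $F_{r_j}$ via $1-x\le e^{-x}$. For a general matroid-rank $r_j$, I would perform the analogous exponential smoothing inside the expectation $F_{r_j}(\myvec{q})=\operatorname{E}[r_j(R)]$, replacing product-of-indicator expressions by their exponential underestimates to obtain a concave, polynomial-time-computable $\tilde F_{r_j}$. Summing gives the concave proxy $\tilde F=\sum_j\tilde F_{r_j}\le F$.

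The pointwise bound $\tilde F_{r_j}\ge(1-\delta/2)F_{r_j}$ would follow from the elementary inequality
\[
\ln(1-x)\ \ge\ -x-\frac{x^{2}}{2(1-x)}\qquad(x\in[0,\delta],\ \delta<1),
\]
which, applied termwise when all $q_i\le\delta$, gives $\prod_i(1-q_i)\ge\exp\!\bigl(-(1+\tfrac{\delta}{2(1-\delta)})\sum_iq_i\bigr)$. Comparing the two univariate shapes $1-e^{-s}$ and $1-e^{-(1+\delta/2+o(\delta))s}$ shows their ratio is at least $1-\delta/2$ uniformly in $s\ge 0$ (the ratio tends to $1/(1+\delta/2)\approx 1-\delta/2$ as $s\to 0$ and to $1$ as $s\to\infty$); summing over $j$ propagates the bound to $\tilde F(\myvec{q})\ge(1-\delta/2)F(\myvec{q})$.

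The algorithm then solves $\max\tilde F(\myvec{q})$ over the polytope $\{\sum_i q_i\le k,\ q_i\in[0,p_i]\}$ by, e.g., the ellipsoid or interior-point method, obtains a fractional optimum $\myvec{q}^*$, and rounds it to an integer vector in $\{0,p_i\}^n$ (and thus a set $T$) via pipage rounding~\cite{AS04} with only lower-order loss. Chaining the inequalities,
\[
F(\myvec{q}_T)\ \gtrsim\ F(\myvec{q}^*)\ \ge\ \tilde F(\myvec{q}^*)\ \ge\ \tilde F(\myvec{q}^{\mathrm{OPT}})\ \ge\ (1-\delta/2)\,F(\myvec{q}^{\mathrm{OPT}})\ =\ (1-\delta/2)\,\texttt{OPT}.
\]
The main obstacle is constructing and analyzing $\tilde F_{r_j}$ for an arbitrary matroid rank rather than the rank-one coverage case: the product structure that makes the exponential substitution transparent for coverage becomes entangled with the saturation of the rank function. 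I would handle this by working with a distributional (concave-closure) representation of $r_j$ and checking that the same log-convexity inequality carries through the rank saturation termwise.
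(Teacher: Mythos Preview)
Your core idea---replace each $q_i$ by $1-e^{-q_i}$ to obtain a concave surrogate---is exactly the paper's approach, but the two arguments diverge in how they handle both the concavity and the approximation bound, and the gaps you flag are real.

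\textbf{Concavity.} You build $\tilde F$ one matroid-rank summand at a time and only verify concavity for rank-one (coverage) terms, then acknowledge the general-rank case as an obstacle. The paper sidesteps this entirely: Dughmi, Roughgarden, and Yan~\cite{DRY11} proved (in the context of Poisson rounding) that the map $\myvec q\mapsto F(1-e^{-q_1},\dots,1-e^{-q_n})$, where $F$ is the multilinear extension of $f$, is concave whenever $f$ is an MRS function. This is precisely your surrogate (your rank-one construction is the special case), so the missing piece is a citation rather than a new argument. Your suggested route via a ``concave-closure representation of $r_j$'' is a different object and would need substantial additional work; it is not needed here.

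\textbf{Approximation bound.} Your ratio argument compares $1-e^{-s}$ to $1-\prod_i(1-q_i)$ and is again specific to rank one; pushing it through rank saturation termwise is not straightforward. The paper instead proves the bound for the \emph{whole} $f$ at once, using only monotonicity and submodularity---no per-summand analysis. Write
\[
F_\nu(\myvec q)=\sum_i\bigl(1-e^{-q_i}\bigr)\,\operatorname{E}_{T\sim\nu}\bigl[f_{T\cap[i-1]}(i)\bigr],
\]
lower-bound $1-e^{-q_i}\ge\frac{1-e^{-\delta}}{\delta}\,q_i$ (valid since $q_i\le\delta$), and then use submodularity to replace the $\nu$-expectation of the marginal by the $\mu$-expectation (the latter conditions on a stochastically larger set, hence smaller marginals). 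This yields $F_\nu(\myvec q)\ge\frac{1-e^{-\delta}}{\delta}F_\mu(\myvec q)\ge(1-\delta/2)F_\mu(\myvec q)$. Note also that your constant comes out slightly short: with $c=\delta/(2(1-\delta))$ you get $1/(1+c)=(2-2\delta)/(2-\delta)<1-\delta/2$, so your argument as written only delivers $1-\delta/2-O(\delta^2)$.

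In short: right skeleton, but both gaps you identify are genuine. The first is closed by~\cite{DRY11}; the second by the telescoping submodularity argument, which avoids the MRS decomposition altogether for the ratio bound.
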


\begin{proof}
Suppose that we relax the program, so that the probability of seeding
node $i$ is $1-e^{-\myvec{q}_{i}}$ (but the cost remain the same).
We can now optimize the following program:
\begin{gather*}
\label{eq:MRS-convex}
\max_{\myvec{q}}\sum_{T}
	  \left(
		\prod_{i\in T}(1-e^{-\myvec{q}_i})
		\prod_{i\notin T}(e^{-\myvec{q}_i})
	\right)
	f(T)\\
\sum_i\myvec{q}_i\leq 1/\epsilon \\
\myvec{q}_i \in [0,\myvec{p}_i] \ \ \forall i\in [n]
\end{gather*}
Dughmi et al.~\cite{DRY11} consider essentially the same program in the context of Poisson rounding and show that this program is concave when $f$ is a MRS function. 
(They use it to achieve a $(1-1/e)$-approximation for general probabilities and do not consider the special case of small $\delta$.)
Thus, we can optimize this program (to within arbitrarily good approximation) in polynomial time.

Observe that $1-e^{-\myvec{q}_{i}}\leq \myvec{q}_{i}$. Therefore we only
decreased the probability of seeding each node, so by monotonicity of $f$,
our expected value will be at least as good as the solution of the new concave program.

We lose at most a factor of $(1-e^{-\delta})/\delta$ for any submodular function $f$ (and in particular for matroid rank sum).
Think of the process where the elements are added one by one, 
and consider the marginal contribution of each one. 
Let $\mu=\mu\left(\myvec{q}\right)$ denote the original distribution on sets\
(i.e. $\Pr_{\mu}[T] = \prod_{i \in T} \myvec{q}_i \cdot \prod_{i \notin T} (1-\myvec{q}_i)$), 
and let $\nu=\nu\left(\myvec{q}\right)$ denote the transformed distribution
(i.e. $\Pr_{\nu}[T] = \prod_{i \in T} (1-e^{-\myvec{q}_i}) \cdot \prod_{i \notin T} e^{-\myvec{q}_i}$); 
let $F_{\mu}$ and $F_{\nu}$ denote the value of the objective function under each distribution.
\begin{eqnarray*}
F_{\nu}\left(\myvec{q}\right) & = & \sum_{i}\left(1-e^{-\myvec{q}_{i}}\right)\E_{T\sim\nu}\left[f\left(\left\{ i\right\} \cup\left(T\cap\left[i-1\right]\right)\right)-f\left(T\cap\left[i-1\right]\right)\right]\\
 & \geq & \sum_{i\colon a\in A_{i}}\left(\frac{1-e^{-\delta}}{\delta}\cdot \myvec{q}_{i}\right)\E_{\color{red}T\sim\nu}\left[f\left(\left\{ i\right\} \cup\left(T\cap\left[i-1\right]\right)\right)-f\left(T\cap\left[i-1\right]\right)\right]\\
 & \geq & \sum_{i\colon a\in A_{i}}\left(\frac{1-e^{-\delta}}{\delta}\cdot \myvec{q}_{i}\right)\E_{\color{blue} T\sim\mu}\left[f\left(\left\{ i\right\} \cup\left(T\cap\left[i-1\right]\right)\right)-f\left(T\cap\left[i-1\right]\right)\right]\\
 & = & \frac{1-e^{-\delta}}{\delta}\cdot F_{\mu}\left(\myvec{q}\right)\,.
\end{eqnarray*}
The first step follows by considering the expected increment for adding
$i$, with respect to $\nu$. The second step follows by lower bounding
$e^{-\myvec{q}_{i}}$ and monotonicity. The third step follows by submodularity.
Finally, the last step follows by again considering the expected increment
for adding $i$, this time with respect to $\mu$.
\end{proof}

Finally, we obtain a tight approximation for the non-adaptive solutions to adaptive seeding instances with MRS objective and arbitrary probabilities by carefully combining the two special cases.

\begin{theorem}
\label{thm:MRS}
For any $\epsilon>0$ there is a polynomial-time algorithm that returns a $(1-1/e-\epsilon)$-approximation of the optimal non-adaptive policy for any matroid rank sum (MRS) function.
\end{theorem}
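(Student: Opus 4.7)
The plan is to invoke Lemma~\ref{lem:ascent} with a near-optimal oracle for the densest $\epsilon$-block. Theorem~\ref{lemma:MSRsmall} already gives a $(1-\delta/2)$-approximation for {\sc SOSP}-$\delta$, and finding the optimal $\epsilon$-block anchored at a fixed $x \in X$ is exactly an {\sc SOSP} instance: the objective is the marginal function $f_{T_j}$ restricted to $\mathcal{N}(x)$, and the expected-size budget is $1/\epsilon$. Since contractions of matroids are matroids, if $f = \sum_\ell w_\ell r_{M_\ell}$ is MRS then $B \mapsto f_{T_j}(B) = \sum_\ell w_\ell (r_{M_\ell}(T_j \cup B) - r_{M_\ell}(T_j))$ is the MRS function corresponding to the contractions $M_\ell / T_j$. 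So the marginal objective inherits the MRS structure needed to apply Theorem~\ref{lemma:MSRsmall}.

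\textbf{Bridging arbitrary probabilities.} The main obstacle is that an arbitrary instance may contain neighbors whose probabilities exceed the $\delta$ threshold demanded by Theorem~\ref{lemma:MSRsmall}. To handle this, I split $\mathcal{N}(x)$ into \emph{heavy} neighbors $H_x = \{y : p_y > \delta\}$ and \emph{light} neighbors $L_x = \{y : p_y \leq \delta\}$. Any feasible $\epsilon$-block has expected cost at most $1/\epsilon$, so it can contain at most $1/(\epsilon\delta)$ heavy neighbors. I then enumerate over all candidate heavy parts $H \subseteq H_x$ of size at most $1/(\epsilon\delta)$ (there are $n^{O(1/(\epsilon\delta))}$ such subsets, which is polynomial for constant $\epsilon$ and $\delta$); for each one I fix $H$ as part of the block and invoke Theorem~\ref{lemma:MSRsmall} on the contracted MRS function $B \mapsto f_{T_j \cup H}(B)$ over $L_x$ with residual budget $1/\epsilon - \mathcal{C}(H)$. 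The best block returned across all $x$ and all $H$ is a $(1-\delta/2)$-approximation to the true optimal $\epsilon$-block, because the optimum decomposes into some heavy part (caught by enumeration) and a light part (captured by the SOSP oracle up to a $(1-\delta/2)$-factor).

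\textbf{Assembling the bound.} Given $\epsilon > 0$, choose $\delta = \delta(\epsilon)$ small enough that $1 - e^{-(1-\delta/2)} \geq 1 - 1/e - \epsilon/2$ (any $\delta = \Theta(\epsilon)$ suffices). Plugging the oracle above, with $\alpha = 1-\delta/2$, into Lemma~\ref{lem:ascent} produces a solution of value at least $\left(1 - 1/e^{1-\delta/2} - O(\epsilon)\right) \texttt{OPT}_{NA} \geq (1 - 1/e - \epsilon) \texttt{OPT}_{NA}$ after absorbing constants into $\epsilon$. All steps are polynomial: the enumeration over heavy parts is $n^{O(1/(\epsilon\delta))} = \poly(n)$ for constant $\epsilon$, and the convex program from Theorem~\ref{lemma:MSRsmall} is polynomial. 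The rounding from the fractional $\myvec{q}$ produced by the convex program to an integral block can be carried out using pipage rounding as in the discussion preceding Theorem~\ref{lemma:MSRsmall}, losing only $O(\delta)$ in value, which is again absorbable into $\epsilon$.

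\textbf{Where the difficulty lies.} The conceptual work is in the heavy/light split and in the observation that MRS structure is preserved under marginals; the rest is bookkeeping to ensure the two constant losses (the $(1-\delta/2)$ factor from the SOSP oracle and the $O(\epsilon)$ slack in Lemma~\ref{lem:ascent}) compose to a clean $(1-1/e-\epsilon)$ guarantee. No new ideas beyond combining Theorem~\ref{lemma:MSRsmall}, Corollary~\ref{thm:rnr-constant}-style enumeration, and Lemma~\ref{lem:ascent} are needed.
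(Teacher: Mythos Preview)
Your approach mirrors the paper's: enumerate the heavy part of a block, apply the {\sc SOSP}-$\delta$ oracle of Theorem~\ref{lemma:MSRsmall} to the light part, and feed the resulting block oracle into Lemma~\ref{lem:ascent}. Two steps fail as written, though.

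The main one: Lemma~\ref{lem:ascent} needs an $\alpha$-approximation in \emph{density}, i.e.\ $\frac{F_{T_j}(B_j)}{1+\mathcal{C}(B_j)} \geq \alpha\,\frac{F_{T_j}(B_O)}{1+\mathcal{C}(B_O)}$, whereas your oracle always calls SOSP with the full residual budget $1/\epsilon-\mathcal{C}(H)$. Since $f$ is monotone, SOSP exhausts that budget whenever enough light neighbors are available, so your block has cost $\approx 1+1/\epsilon$ regardless of how cheap the true densest block is. If $\mathcal{C}(B_O)\ll 1/\epsilon$ and the remaining light neighbors contribute little marginal value, your candidate's density is only an $O(\epsilon)$-fraction of $B_O$'s, and plugging $\alpha\approx\epsilon$ into Lemma~\ref{lem:ascent} yields nothing useful. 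The paper repairs this by additionally enumerating the light-part budget over a grid $\{\epsilon'',2\epsilon'',\dots,k'\}$ and keeping the densest resulting candidate; when the enumeration hits both the correct heavy set and (approximately) the correct light budget, the density matches $B_O$'s up to a $(1-\delta/2-\epsilon'')$ factor.

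A subtler point: the block objective is $F_{T_j}(B)=\sum_i p(R_i)\,f\bigl((T_j\cap R_i)\cup(B\cap R_i)\bigr)-\text{const}$, not the SOSP value of the deterministic contraction $f_{T_j}$, because the already-chosen $T_j$ (and the enumerated heavy set $H$) also realize randomly in the non-adaptive model. Contracting by the full $T_j\cup H$ can zero out elements that have positive marginal against a typical $T_j\cap R_i$. Your MRS-closure argument survives once you use the correct function $g(B')=\sum_i p(R_i)\,f\bigl(((T_j\cup H)\cap R_i)\cup B'\bigr)$, which is a convex combination over realizations of matroid-rank contractions and hence still MRS, so Theorem~\ref{lemma:MSRsmall} applies to $g$.
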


\begin{proof}
Run Algorithm \textsc{NonAdaptive} with subroutine \textsc{FindOptimalNonAdaptiveBlock} that finds $\epsilon'$-blocks implemented as follows:
Enumerate over all feasible subsets of nodes with probabilities at least $\delta$.
For each subset, let $k'$ be the remaining budget for this block. Solve the concave program for budget in $\{\epsilon'',2\epsilon'',\ldots,k'\}$. 
By Lemma \ref{lemma:MSRsmall}, 
when our enumeration reaches the optimal subset of large-probabilities elements, and we use the approximately correct additional budget (we spend at most an additional $\epsilon$ budget),
the solution of the concave program is a $\left(\left(1-\delta/2\right)-\epsilon''\right)$-approximation to the densest subset.

The concave program might return a solution that does not correspond to a set (have some $\myvec{q}_i$ that does not equal to $0$ or $\myvec{p}_i$). However, using the \emph{pipage rounding}~\cite{AS04} technique the algorithm can round it (make $\myvec{q}_i$ equals either $\myvec{p}_i$ or $0$) to have at most one undetermined item without any loss of value. If such an item remains, the algorithm compares the density of the solution that includes that item to the solution that does not include it and chooses the one with maximum marginal density. It is easy to verify that one of those solution has a higher density than the density of the fractional solution. This procedure might cause the block to cost $\delta$ more so in total $1+1/\epsilon'+\delta$.

We chose $\epsilon',\epsilon''$ and $\delta\leq \epsilon'$ (thus the total cost of a block is at most $3/\epsilon'$) such that the total loss due to Lemma \ref{lem:ascent} is $\epsilon$. 
For large values of $k$ the theorem follows by Lemma \ref{lem:ascent} and by noticing that in the analysis of that lemma we only analyze iterations where there is at least $3/\epsilon'$ budget left so this procedure returns a valid block in each such iteration. If $k$ is not large enough (smaller than some constant that depends on $\epsilon$), we can enumerate over all first stage set of size at most $k$ and complete the solution by solving a monotone submodular maximization on the second stage to get a $(1-1/e)$ approximation.
\end{proof}
\else
Surprisingly, as we show in the full version, for submodular functions in a class known as \emph{matroid rank sum} (which includes coverage functions), the above optimization problem can be solved nearly optimally.  At a high level, we show that for such functions the problem can be well approximated through a convex program, which then enables us to produce a solution whose approximation becomes optimal as $\delta$ vanishes.
%For these functions, when all probabilities are smaller than some $\delta>0$ we can essentially convert the objective to a convex program losing a factor that depends on $\delta$. 

\begin{theorem}\label{thm:rnr-problem1}
For any \emph{matroid rank sum} function there is a polynomial time $\left(1-\delta/2\right)$-approximation algorithm for the Submodular-optimization-with-small-probabilities problem.
\end{theorem}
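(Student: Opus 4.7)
The plan is to work with the fractional/multilinear relaxation of SOSP in which each element $i$ is independently seeded with probability $\mathbf{q}_i \in [0,\mathbf{p}_i]$ subject to $\sum_i \mathbf{q}_i \leq k$; pipage rounding will convert a fractional solution to an integral one with negligible loss, so it suffices to produce a near-optimal $\mathbf{q}$. The multilinear objective $F_\mu(\mathbf{q}) = \mathbb{E}_{T\sim\mu(\mathbf{q})}[f(T)]$ is not in general concave, so I cannot optimize it directly. The key substitution I would make is to replace the seeding probability $\mathbf{q}_i$ by $1-e^{-\mathbf{q}_i}$, obtaining a new objective $F_\nu(\mathbf{q})$ under the transformed distribution $\nu(\mathbf{q})$. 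The budget constraint is preserved because $1-e^{-\mathbf{q}_i}\leq \mathbf{q}_i$.

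The first technical ingredient is that $F_\nu$ is concave on the box $[0,\delta]^n$ whenever $f$ is a matroid-rank-sum function; this is exactly the \emph{Poisson rounding} construction of Dughmi--Roughgarden--Yan, which I would cite as a black box. Concavity together with the linear constraints means I can solve $\max_{\mathbf{q}} F_\nu(\mathbf{q})$ to arbitrary accuracy in polynomial time via convex programming. Let $\mathbf{q}^\nu$ denote the (approximate) optimum.

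The second technical ingredient is a pointwise comparison $F_\nu(\mathbf{q}) \geq \tfrac{1-e^{-\delta}}{\delta}\, F_\mu(\mathbf{q})$ valid for every $\mathbf{q}\in[0,\delta]^n$. I would prove this by the standard telescoping expansion: order the ground set arbitrarily and write each $F_\star(\mathbf{q})$ as the sum over $i$ of the expected marginal contribution of $i$ to the random prefix on $[i-1]$, where the prefix is drawn according to $\star\in\{\mu,\nu\}$. For $F_\nu$ the per-element weight is $1-e^{-\mathbf{q}_i}\geq \tfrac{1-e^{-\delta}}{\delta}\,\mathbf{q}_i$ on $[0,\delta]$, which handles the weight in front. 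For the conditional marginal, I would replace expectation under $\nu$ by expectation under $\mu$ using the fact that $\nu(\mathbf{q})$ stochastically dominates $\mu(\mathbf{q})$ coordinatewise (since $1-e^{-\mathbf{q}_i}\leq \mathbf{q}_i$) combined with submodularity, which makes marginals monotonically nonincreasing in the base set.

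Combining the two ingredients: take the solution of the concave program, but then seed with the \emph{original} probabilities $\mathbf{q}^\nu$ rather than $1-e^{-\mathbf{q}^\nu}$. Writing $\mathbf{q}^\mu$ for the true optimum of the multilinear problem, monotonicity of $f$ gives $F_\mu(\mathbf{q}^\nu)\geq F_\nu(\mathbf{q}^\nu)\geq F_\nu(\mathbf{q}^\mu)\geq \tfrac{1-e^{-\delta}}{\delta}F_\mu(\mathbf{q}^\mu)\geq (1-\delta/2)\,\mathrm{OPT}$, where the last step uses $\tfrac{1-e^{-\delta}}{\delta}\geq 1-\delta/2$. A final pipage rounding step converts the fractional $\mathbf{q}^\nu$ into a random set of expected size $\leq k$ with no loss in expected value, completing the algorithm. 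I expect the main obstacle to be executing the telescoping submodularity argument cleanly—specifically, making sure the stochastic-dominance step that swaps $\nu$ for $\mu$ inside the conditional expectation of the marginal contribution is rigorous; everything else is either a cited black box (concavity, pipage) or straightforward calculus.
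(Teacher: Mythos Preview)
Your proposal is correct and matches the paper's proof essentially step for step: the Poisson substitution $\mathbf{q}_i \mapsto 1-e^{-\mathbf{q}_i}$ to obtain a concave program for MRS functions (via Dughmi--Roughgarden--Yan), the telescoping submodularity argument yielding the $(1-e^{-\delta})/\delta \geq 1-\delta/2$ loss bound, and monotonicity plus pipage rounding to finish. One terminological slip to fix: since $1-e^{-\mathbf{q}_i}\leq \mathbf{q}_i$, it is $\mu$ that stochastically dominates $\nu$ (not the reverse as you wrote), though your inequality still goes the right way because submodular marginals are larger on the smaller, $\nu$-distributed prefix.
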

\fi

Combining this with the results of the previous section we get that for MRS functions we have a $(1-1/e)^2$ approximation for the adaptive seeding problem as desired.

\ifFULL
\subsubsection{Hardness of approximating optimal $\epsilon$-block for general submodular functions}
\else
\paragraph{Hardness of computing $\epsilon$-blocks.}
\fi
Unfortunately, for general submodular functions the Submodular-optimization-with-small-probabilities problem cannot be approximated arbitrary well even with a constant budget (and as a special case, computing optimal densest $\epsilon$-blocks is hard).  
\ifFULL
Our algorithm heavily relies on the reduction to a concave program,
yet as pointed out by \cite{DRY11, DV11}, this program is not concave in general. While this means our current approach fails it does not mean the problem is computationally hard. 

Standard techniques for showing hardness of submodular maximization
seem to fail for this problem: Feige's construction \cite{F98} would also show
hardness for the max-cover version, but we know that this problem
is easy in this setting. The symmetry gap \cite{Vondrak13-symmetry_gap}
should give an information-theoretic lower bound (in the oracle model), 
but with a constant budget it is easy to design an exponential-time,
poly-information algorithm that achieves an arbitrarily good approximation.

We therefor look for a construction where the optimal set behaves \emph{locally}
very differently from a random set. In other words, when we look at
a random $(1/\epsilon)$-subset of the optimal set, it should be different from
what we expect from a random $(1/\epsilon)$-subset somewhere else in the graph.
Intuitively, this is very similar to the \textsc{Planted-Clique}
problem, where every subset of the clique is of course also a clique,
but the rest of the graph may be arbitrarily (constant) sparse \cite{densest_k-subgraph_AAMMW11}.
\fi

\begin{theorem}
\label{thm:planted-clique}
If the Submodular-optimization-with-small-probabilities problem with \textbf{a constant budget} $k$ can be approximated within any constant factor better than
$\left(1-e^{-k/2}\right)/\left(1-\left(\frac{k}{2}+1\right)e^{-k}\right)$,
then there is a polynomial time algorithm for the \textsc{Planted-Clique} problem that succeeds with high probability.
In particular, for $k=1.7$,
\begin{eqnarray*}
\left(1-e^{-k/2}\right)/\left(1-\left(\frac{k}{2}+1\right)e^{-k}\right)\approx0.865
\end{eqnarray*}
\end{theorem}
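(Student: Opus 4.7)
The plan is to reduce from \textsc{Planted-Clique}: given an $n$-vertex graph $G$ drawn either from $G(n,1/2)$ or from $G(n,1/2)$ with a clique of size $k^\ast = k/\delta$ planted on a uniformly random subset, I would produce a SOSP-$\delta$ instance with ground set $V(G)$, every $p_v = \delta$, and budget $k$. Any non-adaptive choice then picks at most $k/\delta$ vertices $T$, and when $|T|=k/\delta$ the realized subset $R$ is distributed as $\mathrm{Bin}(k/\delta,\delta)$, which tends to $\mathrm{Poisson}(k)$ as $\delta\to 0$.

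The heart of the reduction is a monotone submodular function $f_G$, computable from $G$ alone, chosen so that placing the budget on the planted clique on a YES-instance yields $\mathbb{E}[f_G(R)] \geq V_Y = 1-(k/2+1)e^{-k}$, while on a NO-instance (and for any set that does not induce a clique-like subset of size $k/\delta$) the best achievable value is at most $V_N = 1-e^{-k/2}$. The number $V_Y$ is precisely $\mathbb{E}[\tfrac{1}{2}\min(|R|,2)]$ for $|R|\sim\mathrm{Poisson}(k)$: expanding gives $\tfrac{1}{2} k e^{-k}+(1-(1+k)e^{-k})=1-(1+k/2)e^{-k}$, consistent with a function that grants half credit for landing a single realized vertex and full credit for landing two or more (an induced edge). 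The number $V_N$ is $1-(1-\delta)^{k/(2\delta)}\to 1-e^{-k/2}$, the best one can do by covering a single ``special'' element using half of the expected budget. Without a planted clique, the Erdős--Rényi structure of $G$ forbids any vertex set of size $k/\delta$ from behaving ``clique-like'' on constant-sized realizations, so the edge-bonus component of $f_G$ contributes negligibly and the optimum collapses to $V_N$.

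Given such $f_G$, any polynomial-time algorithm with approximation ratio strictly better than $V_N/V_Y$ would output, on input $G$, a value exceeding $V_N$ exactly on the YES instances (since $\alpha V_Y > V_N$) and at most $V_N$ on the NO instances, thereby distinguishing the two cases of \textsc{Planted-Clique} with high probability---contradicting the hardness assumption. For $k=1.7$ the threshold evaluates to $(1-e^{-0.85})/(1-1.85 e^{-1.7})\approx 0.865$.

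The hard part is engineering $f_G$ with all the required properties at once. Coverage and other matroid-rank-sum functions are ruled out because they admit near-optimal approximations for SOSP by the earlier theorem; naive clique-detectors such as $\mathbf{1}[R \text{ contains an edge}]$ are monotone but fail submodularity, since adding an edge-completing vertex to a large edge-free set has strictly larger marginal than adding it to a small subset. A plausible candidate is the rank function of a carefully tailored matroid built from $G$, or a budget-additive composition crafted to behave like $\tfrac{1}{2}\min(|\cdot|,2)$ on clique-induced realizations and like ``coverage of a single special element'' elsewhere---while remaining monotone submodular. Verifying the NO-side bound will additionally require random-graph concentration arguments that rule out any subset of size close to $k/\delta$ in $G(n,1/2)$ from inducing enough clique-like structure to simulate the YES-side gain, and showing that on the YES side no composite strategy beats the pure-clique placement requires a careful submodular maximization analysis.
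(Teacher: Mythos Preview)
Your framework is right---the Poisson approximation, the target values $V_Y=1-(k/2+1)e^{-k}$ and $V_N=1-e^{-k/2}$, and the distinguishing logic all match the paper. But the proposal has a genuine gap: you never construct $f_G$. You explicitly flag this as ``the hard part'' and then speculate, which means the proof is missing its central ingredient.

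The paper's function is simple and worth knowing:
\[
f(T)=\begin{cases}1 & \text{if $T$ contains an edge of $G$,}\\ 1-2^{-|T|} & \text{otherwise.}\end{cases}
\]
This fixes exactly the failure you identified in the raw edge-indicator. The marginal of adding a vertex $u$ to an edge-free set $T$ is $2^{-|T|}$ if an edge is created and $2^{-(|T|+1)}$ otherwise, hence at most $2^{-|T|}$; adding $u$ to any strict subset $T'\subsetneq T$ gains at least $2^{-|T'|-1}\geq 2^{-|T|}$, so submodularity holds. On the planted clique, any two realized vertices form an edge, so $f(T)=\tfrac{1}{2}\min(|T|,2)$ exactly---this is precisely your $V_Y$ computation. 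On a sparse $l$-subgraph (the soundness side of the Alon et al.\ densest-$l$-subgraph gap), edges are rare enough that $f(T)\approx 1-2^{-|T|}$, and since $\mathbb{E}\bigl[2^{-|T|}\bigr]=e^{-k}\sum_i (k/2)^i/i!=e^{-k/2}$ for $|T|\sim\mathrm{Poisson}(k)$, you get $V_N$. Your intuition that $V_N$ comes from a coverage-of-one-special-element mechanism is not how it arises here; it is the moment generating function of the Poisson at $-\ln 2$.

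Two smaller notes. First, a single matroid rank function \emph{is} an MRS function, so your ``plausible candidate'' of a tailored matroid rank is ruled out by the very observation you make one sentence earlier. Second, the paper reduces from the densest-$l$-subgraph hardness of Alon et al.\ rather than raw $G(n,1/2)$ planted clique; this gives a clean completeness/soundness dichotomy (clique versus every $l$-subgraph $\epsilon$-sparse) and avoids the random-graph concentration arguments you anticipated needing.
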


\ifFULL
\begin{proof}
We reduce from the Densest $l$-subgraph problem. 
In particular, Alon et al. \cite{densest_k-subgraph_AAMMW11}
proved that for any constant $\epsilon > 0$, 
given a graph $G=\left(V,E\right)$ it is \textsc{Planted-Clique} hard%
\footnote{See \cite{densest_k-subgraph_AAMMW11} for precise statement.}
to distinguish between: 
\begin{itemize} 
\item Completeness: $G$ contains a clique of size $l$; or
\item Soundness: every $l$-vertex subgraph of $G$ is $\epsilon$-sparse
(i.e. it contains at most an $\epsilon$-fraction of the ${l}\choose{2}$ edges an $l$-clique would contain.)
\end{itemize}
Given $G$, we construct a monotone
submodular function that gives $1$ for every subset that contains an edge,
and otherwise approaches $1$ exponentially with the size of the subset:
\[
f\left(T\right)=\begin{cases}
1 & \exists\left(u,v\right)\in E\mbox{ s.t. }\left\{ u,v\right\} \subseteq T\\
1-2^{-\left|T\right|} & \mbox{otherwise}
\end{cases}
\]
We set the maximal probabilities such that a budget of $k$ is sufficient
to bid exactly for the entire $l$-clique: $p_{i}=k/l$. Monotonicity is trivial.
\begin{itemize}
\item \textbf{Submodularity:} Adding $u$ to $T$ may increase $f$ by at
most $2^{-\left|T\right|}$. However, adding $u$ to any $T'\subsetneq T$
would increase $f$ by at least $2^{-\left(\left|T'\right|-1\right)}\geq2^{-\left|T\right|}$.
\end{itemize}
Now, observe that regardless of the choice of $\mathbf{q}$, the random
variable $\left|T\right|$ (the size of the realized set) behaves
approximately like a Poisson distribution with parameter $k$. More
precisely, the total variation distance between $\left|T\right|$
and $\mathbf{Pois}\left(k\right)$ is bounded $\sum \myvec{q}_{i}^{2}\leq k^{2}/l<\epsilon$
(e.g. \cite{LeCam60_poisson_binomial}).
\begin{itemize}
\item \textbf{Completeness: }Let $\mathbf{q}^{\textrm{opt}}$ be $1$ on the
$l$-clique and $0$ otherwise; then\textbf{ 
\begin{eqnarray*}
\E\left[f\left(T^{\textrm{opt}}\right)\right] & = & 1-\frac{1}{2}\Pr\left[\left|T\right|=1\right]-\Pr\left[\left|T\right|=0\right]\\
 & \geq & 1-\frac{1}{2}\Pr\left[\mathbf{Pois}\left(k\right)=1\right]-\Pr\left[\mathbf{Pois}\left(k\right)=0\right]-\epsilon\\
 & = & 1-\left(\frac{k}{2}+1\right)e^{-k}-\epsilon
\end{eqnarray*}
}
\item \textbf{Soundness: } Suppose that every $l$-subgraph of $G$ is $\epsilon$-sparse.
Observe that by submodularity of $f$, we can assume w.l.o.g. that $\mathbf{q}^{\textrm{alg}}$ defines a set, 
i.e. it is $\myvec{p}_i$ for $l$ vertices that contain at most $\epsilon {l\choose2}$ edges (and $0$ everywhere else).
Then, 
\[
\E\left[f\left(T^{\textrm{alg}}\right)\right] 
\leq 1-\sum_{i=0}^{\infty}\Pr\left[\left|T\right|=i\right]2^{-i}+ \Pr[T \text{ contains an edge}]
\]

For each of the $\epsilon {l\choose2}$ potential edges, 
the probability that both vertices belong to $T$ 
is ${|T|\choose2} / {l\choose2}$.
Taking a union bound over all of them, we have 

\[
\Pr[T \text{ contains an edge}] \leq \epsilon {|T| \choose 2} < \epsilon |T|^2
\]

Finally, since $|T|$ is distributed $\epsilon$-like $\mathbf{Pois}\left(k\right)$,

\[
\Pr[T \text{ contains an edge}] < \epsilon \cdot (k^2 + k ) + \epsilon
\]

Therefore,

\textbf{
\begin{eqnarray*}
\E\left[f\left(T^{\textrm{alg}}\right)\right]  & 
	\leq & 1-\sum_{i=0}^{\infty}\Pr\left[\mathbf{Pois}\left(k\right)=i\right]2^{-i} + O(\epsilon)\\
 & = & 1-\sum_{i=0}^{\infty}\frac{k^{i}}{i!}e^{-k}2^{-i}+O(\epsilon)\\
 & = & 1-\left(\sum_{i=0}^{\infty}\frac{\left(\frac{k}{2}\right)^{i}}{i!}e^{-k/2}\right)\cdot e^{-k/2}+O(\epsilon)\\
 & = & 1-e^{-k/2}+O(\epsilon)
\end{eqnarray*}
}
%In general, $\mathbf{q}^{\textrm{alg}}$ may be fractional,
%but the distribution 
%By linearity of expectation this bound continues to hold for any
%$\mathbf{q}^{\textrm{alg}}$ such that $\sum q_{i}^{\textrm{alg}}=l$
%and the expected density of a subset (of size $l$) chosen according
%to $\mathbf{q}^{\textrm{alg}}$ is at most $\epsilon$. 
\end{itemize}
Thus, for budget $k$, it is \textsc{Planted-Clique} hard to find
any approximation which is better than $\left(1-e^{-k/2}\right)/\left(1-\left(\frac{k}{2}+1\right)e^{-k}\right)$.
For example, set $k=1.7$ to get the approximation factor of $\left(1-e^{-1.7/2}\right)/\left(1-\left(\frac{1.7}{2}+1\right)e^{-1.7}\right)<0.865$. (Recall that since we have fractional costs, the budget may be fractional as well.)
\end{proof}
\else
In the full version we further discuss this lower bound, and why such lower bounds cannot be obtained through symmetry-gap arguments~\cite{Vondrak13-symmetry_gap}.  
\fi
This lower bound implies that the non-adaptive framework we use here cannot obtain the $(1-1/e)^2$ approximation ratio\footnote{Note that this does not exclude worse approximations guarantees via non-adaptive policies. In~\cite{RSS14} non-adaptive policies are used to obtain a (poor) constant approximation guarantee under knapsack constraints.  In this paper, we are interested in obtaining the $(1-1/e)^2$ approximation bounds.}. This obstacle motivates our use of $\epsilon$-locally adaptive policies discussed in the following section.

\section{Approximation via $\epsilon$-locally-adaptive policies}\label{sec:local-adaptive}
% !TeX root = ../SubmodularAdaptiveSeedingFull.tex

In this section we prove our main result:

\begin{theorem}
\label{thm:main}
For every constant $\epsilon>0$ there is an algorithm that runs in polynomial time and returns an adaptive policy which is a $((1-1/e)^2-\epsilon)$-approximation to the optimal adaptive policy with general monotone submodular functions.
\end{theorem}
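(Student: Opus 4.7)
The plan is to follow the blueprint sketched in the paper: introduce $\epsilon$-locally-adaptive policies, run a greedy algorithm that iteratively adds the densest $\epsilon$-block, and decompose the final guarantee into three multiplicative factors. These are: a greedy factor $\ge 1-1/e-O(\epsilon)$ against the best $\epsilon$-locally-adaptive policy, a structural factor $\texttt{OPT}_{LA}\ge(1-O(\epsilon))\,\texttt{OPT}_{NA}$, and the non-adaptive-to-adaptive gap $\texttt{OPT}_{NA}\ge(1-1/e-O(\epsilon))\,\texttt{OPT}_{A}$ already established in Lemma~\ref{lemma:naToA}. Multiplying and rescaling $\epsilon$ gives the $((1-1/e)^{2}-\epsilon)$ approximation; since every $\epsilon$-locally-adaptive policy is by definition an adaptive policy, the greedy's output is directly the adaptive algorithm we seek, with no final conversion step.

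I would define an $\epsilon$-block to be a pair $(x,T)$ with $x\in X$ and $T\subseteq\mathcal{N}(x)$ of cardinality at most $1/\epsilon$, together with the local rule \emph{seed $x$ in the first stage, and in the second stage seed every element of $T\cap R$ once the realization $R$ arrives}. An $\epsilon$-locally-adaptive policy is a collection of such blocks whose total expected cost fits the budget $k$, and its value is the expected value of $f$ on the union of realized seeds. The key observation is that, because each block contains only $1/\epsilon$ candidate children, the densest $\epsilon$-block with respect to any residual set $T_j$ can be found in time $n^{O(1/\epsilon)}$ by enumerating every $x\in X$ and every $T\subseteq\mathcal{N}(x)$ with $|T|\leq 1/\epsilon$, and its value can be estimated to arbitrary precision by sampling realizations of those $1/\epsilon$ neighbors. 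The greedy algorithm repeatedly appends the densest remaining $\epsilon$-block; submodularity of the expected objective $F$ together with the same truncation-to-top-$1/\epsilon$-marginals argument used in the warm-up shows that the densest $\epsilon$-block is a $(1-O(\epsilon))$-approximation to the densest block of arbitrary size, and the inductive analysis of Lemma~\ref{lem:ascent} then yields a $(1-1/e-O(\epsilon))$ approximation to $\texttt{OPT}_{LA}$.

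The main obstacle is the structural factor $\texttt{OPT}_{LA}\ge(1-O(\epsilon))\,\texttt{OPT}_{NA}$: we must show that restricting a non-adaptive policy's second-stage set to decompose into blocks of cardinality at most $1/\epsilon$, and to respect the hard per-realization budget, costs almost nothing. Given a near-optimal non-adaptive $(S,T)$, my plan is to chunk each $T\cap \mathcal{N}(x)$ into pieces of cardinality at most $1/\epsilon$ and interpret each piece (headed by $x$) as an $\epsilon$-block. The difficulty is that when the probabilities are vanishing $|T|$ can be much larger than $k$, so the naive chunking produces many blocks and a single parent $x$ may need to head many of them; moreover, we must convert an expected-budget policy into a hard-budget one. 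I would address this by combining the chunking with the two tools already developed in the paper: first pass to the fractional relaxation $\mathbf{q}$ of Lemma~\ref{lemma:naToA} so that without loss the per-element probabilities are spread and small, then balance the chunks so each has expected mass $\approx 1/\epsilon$, charging the parent only once per family of chunks via submodularity (the marginal cost of an additional chunk under an already-seeded parent is just the children); and finally apply a Chernoff-plus-contention-resolution argument analogous to Lemma~\ref{lemma:nonAdaptToAdapt} within each block to ensure that (a) the hard per-realization budget is respected except on an event of probability $O(\epsilon)$, (b) each realized element is seeded with probability $1-O(\epsilon)$, and therefore (c) by submodularity and independence across blocks the overall value is preserved up to a multiplicative $(1-O(\epsilon))$. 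Chaining the three factors and cleaning up $\epsilon$ completes the proof of Theorem~\ref{thm:main}.
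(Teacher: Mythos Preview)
Your overall architecture matches the paper's: greedily add dense $\epsilon$-blocks to get a $(1-1/e-O(\epsilon))$ factor against $\texttt{OPT}_{LA}$, then argue $\texttt{OPT}_{LA}\ge(1-O(\epsilon))\,\texttt{OPT}_{NA}$, and finally invoke the $(1-1/e)$ adaptivity gap from Lemma~\ref{lemma:naToA}. The gap is in the definition of an $\epsilon$-block, and it is exactly the gap that the paper's main idea is designed to close.

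Your $\epsilon$-block is a pair $(x,T)$ with $|T|\le 1/\epsilon$ and the fixed rule ``seed $T\cap R$.'' This is a \emph{non-adaptive} block: the candidate set $T$ is committed to before the realization. With a hard cardinality bound of $1/\epsilon$ on $T$, the structural step $\texttt{OPT}_{LA}\ge(1-O(\epsilon))\,\texttt{OPT}_{NA}$ fails. Take a single parent $x$ with $n$ children each of probability $\delta$, $f(T)=|T|$, and budget $k=2$. The optimal non-adaptive policy picks $|T|=1/\delta$ children and has value $1$. Any collection of your blocks spends at least $1$ on a parent per block, so with budget $2$ it selects $O(1/\epsilon)$ candidate children and has value $O(\delta/\epsilon)\to 0$. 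Your third paragraph tries to patch this by ``balancing the chunks so each has expected mass $\approx 1/\epsilon$,'' but a chunk of expected mass $1/\epsilon$ has cardinality $1/(\epsilon\delta)$ and is no longer one of your $\epsilon$-blocks; so the policy you construct is not in the class your greedy optimizes over. If instead you relax the block constraint to \emph{expected} cost $\le 1/\epsilon$ so that the chunking goes through, you can no longer enumerate blocks in polynomial time---this is precisely the SOSP problem, and Theorem~\ref{thm:planted-clique} shows it cannot be approximated arbitrarily well.

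What the paper does differently is make the block genuinely adaptive: an $\epsilon$-block is a parent set $S\subseteq X$ with $|S|\le 1/\epsilon^{2}$ together with a \emph{budget} $\le 2/\epsilon$, and for each realization $R_i$ the block chooses $T_i\subseteq\mathcal{N}(S)\cap R_i$ with $|T_i|\le 2/\epsilon$ \emph{after} seeing $R_i$. Finding the densest such block is easy (enumerate $S$ and the budget; per sampled realization, the optimal $T_i$ is found by brute force over constant-size subsets), and the structural lemma now goes through: one chunks a non-adaptive $(S,T)$ into \emph{budgeted} pieces of expected child-mass in $[1/\epsilon,2/\epsilon]$ (Lemma~\ref{lem:S_eps-q_eps}), and then within each piece the contention-resolution argument of Lemma~\ref{lemma:nonAdaptToAdapt} converts the expected-budget commitment into a hard per-realization budget (Lemma~\ref{lemma:nonAdaptToLocalAdapt}). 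Note also that allowing up to $1/\epsilon^{2}$ parents per block, rather than a single $x$, is needed so that a block can accumulate $\Theta(1/\epsilon)$ expected child-mass even when each individual parent's children contribute little. Your proposal is missing both the within-block adaptivity and the multi-parent block; without them the middle inequality of your chain does not hold.
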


\noindent\textbf{Proof outline.}
Our entire proof relies on the novel definition of a restricted class of adaptive policies which we call {\em $\epsilon$-locally-adaptive}.
Informally, we say that a policy is $\epsilon$-locally-adaptive, if it can be divided into $\epsilon$-blocks. In this context, an $\epsilon$-block is a subset of $X$ of constant size (for technical reasons these are not singletons as in previous cases), and for each realization an adaptively chosen set of constant size of its neighbors. 
We prove that a greedy algorithm that in each iteration finds the optimal $\epsilon$-block gives a $(1-1/e-\epsilon)$-approximation to the optimal {\em $\epsilon$-locally-adaptive} policy. 
This adaptive variant of $\epsilon$-blocks allows us to find the optimal subset for each realization (much in the same way as in the warmup presented in the introduction) and thus find the optimal block. 
Thus, while the non-adaptive block structure allows for greedy optimizations, the power of adaptivity within a block circumvents the hardness result of the previous section.
%How general is the class of {\em $\epsilon$-locally-adaptive} solutions?
%In section \ref{sub:NA-to-locally-A}
We then prove that the optimal {\em $\epsilon$-locally-adaptive} policy is a $(1-1/e)$-approximation to the optimal adaptive policy by using the fact that locally-adaptive policies dominate non-adaptive policies. 
%This analysis leverages a connection between non-adaptive and {\em $\epsilon$-locally-adaptive} policies.
%We first show that the optimal non-adaptive policy gives a $(1-1/e)$-approximation to the optimal adaptive policy.
In particular, we show we can convert a non-adaptive policy to an {\em $\epsilon$-locally-adaptive} policy,
with arbitrarily small loss in value, and thus our bound follows.
A natural question is then whether locally-adaptive policies are as good as adaptive policies. We answer that question negatively by presenting an example that exhibits a gap ($\approx 0.853$) between the optimal locally-adaptive policy and the optimal adaptive policy.  
We conclude this expository subsection by formally defining $\epsilon$-locally-adaptive policies.

\begin{definition}
An (adaptive) $\epsilon$-block is a set $S\subseteq X$ of size at most $1/\epsilon^2$ and for each realization $R_i$ a set $T_{i}\subseteq \mathcal{N}(S)\cap R_i$ of size at most $2/\epsilon$. The cost of a block $B$ is $\mathcal{C}(B)=|S|+\max_i(|T_i|)$.
An {\em $\epsilon$-locally-adaptive} policy is a set $\mathcal{B}$ of (not necessarily disjoint) $\epsilon$-blocks.%
\footnote{Note that the blocks are not necessarily independent - $T_i$ can depend on the entire $R_i$ and not only on $\mathcal{N}(S)\cap R_i$.}
\end{definition}

%We know that $\ell$ is at most $K/2$ and we can have empty sets but wasn't sure where to mention this.
Let $T_{i,B}$ be the set seeded by block $B$ in realization $R_i$ and let $\mathcal{T}_i(\mathcal{B})=\bigcup_{B\in\mathcal{B}}T_{i,B}$. We abuse notation and generalize the value and cost functions to be applied on these policies. That is, we let the value of such a policy be $F(\mathcal{B})=\sum_{i=1}^{m}p(R_i)f(\mathcal{T}_i(\mathcal{B}))$ and its cost \mbox{$\mathcal{C}(\mathcal{B})=\sum_{B\in\mathcal{B}}\mathcal{C}(B)$}. The optimal $\epsilon$-locally-adaptive policy with budget $k$ is then: 
%
%\begin{align*}
%\max_{S_1\ldots S_\ell,T_{1,1}\ldots T_{m,\ell}} \sum_{i=1}^{m}p(R_i)f(\bigcup_{j\in[\ell]}T_{i,j})\\
%|S_j|\leq 1/\epsilon^2\\
%T_{i,j} \subseteq \mathcal{N}(S_j)\cap R_i, |T_{i,j}|\leq 2/\epsilon \\
%\sum_{j} \big( |S_j| + \max_{i}(|T_{i,j}|) \big )\leq K
%\end{align*}
%
$$
\texttt{OPT}_{LA}^{\epsilon} =\max_{\mathcal{B}}\left\{ F(\mathcal{B})\ : \ \mathcal{C}(\mathcal{B})\leq k,\forall B\in\mathcal{B}:
|S_B|\leq 1/\epsilon^2, \forall i:
T_{i,B} \subseteq \mathcal{N}(S_B)\cap R_i, |T_{i,B}|\leq 2/\epsilon
\right\}
$$
%
%\subsection{Optimizing over $\epsilon$-locally-adaptive policies}
%\subsection{The \textsc{LocallyAdaptiveGreedy} algorithm}
\subsection{The Algorithm}
We now describe the \textsc{LocallyAdaptiveGreedy} algorithm.
We run a greedy algorithm that in each iteration adds a new $\epsilon$-block to the current solution.
The algorithm always adds a block with an optimal marginal density, 
i.e. a block which maximizes the ratio between the expected marginal contribution and cost.
A formal description of the algorithm is included below. 
The {\sc FindOptimalAdaptiveBlock} subroutine simply enumerates over all subsets of size less than $1/\epsilon^2$ of $X$ and all budgets of size at most $2/\epsilon$ and returns the pair with the highest marginal density. 

\begin{algorithm}
\caption{\textsc{LocallyAdaptiveGreedy}}
\label{alg:AdaptiveGreedy}
\begin{algorithmic}[1]
\INPUT budget $k$,$\epsilon$
	\STATE $\mathcal{B} \leftarrow \emptyset$
%\WHILE {$\sum_{i=1}^t |S_i|+k_i <k-\frac{1}{\epsilon^2}-\frac{2}{\epsilon}$}
\WHILE {$\mathcal{C}(\mathcal{B}) <k-\frac{3}{\epsilon^2}$}
	\STATE $ \mathcal{B} \leftarrow \mathcal{B} \cup \textsc{FindOptimalAdaptiveBlock} \left(\mathcal{B},\epsilon\right)$
	%\STATE $ \mathcal{B}\leftarrow \mathcal{B}\cup B$
\ENDWHILE
%\STATE $t\leftarrow i$
\RETURN $\mathcal{B}$
\end{algorithmic}
\end{algorithm}

\paragraph{Polynomial-size representation.} As there are possibly exponential many realizations we can't hope to output a full explicit description of a locally-adaptive policy. Our algorithm instead outputs for each block its first stage set $S_i$ and a budget $k_i$ for it to optimize in the second stage as well as an order over the blocks. At every realization the policy seeds the second stage nodes by going over the blocks by order and optimizing the choices of each block given only the choices made by the previous blocks. Note that this implicitly determines the content of each block. In our algorithm we implicitly assume the order on the blocks of $\mathcal{B}$ is the order in which the algorithms adds them to $\mathcal{B}$. 
Note that we can approximate $F(\mathcal{B})$ and $F_\mathcal{B}(B)$ (the marginal value of block $B$ for policy $\mathcal{B}$) for such a policy to any desired accuracy by sampling realizations and running this process on each of them (we thus assume in the analysis that we have an oracle for their value).
%We include the formal description of Algorithm {\sc FindOptimalAdaptiveBlock}.
%\begin{algorithm}[H]
%\caption{\textsc{FindOptimalAdaptiveBlock}}
%\label{alg:FindDensest-LocallyAdaptive}
%\begin{algorithmic}[1]
%\INPUT $\mathcal{B},\epsilon$
%%
%\STATE $S_i\leftarrow\emptyset$,$k_i\leftarrow 1$
%\FOR {$S'\in \{S'|S'\subseteq X, |S'|<\frac{1}{\epsilon^2}\}$}
%  	\FOR { $k'\in\{1,\ldots,\frac{2}{\epsilon}\}$}
%		\IF {$\frac{F_{\mathcal{B}}(S_i,k_i)}{|S_i|+k_i}< \frac{F_{\mathcal{B}}(S',k')}{|S'|+k'}$}
%			\STATE  $S_i\leftarrow S'$, $k_i\leftarrow k'$
%		\ENDIF
%	\ENDFOR
%\ENDFOR
%\RETURN $S_{i},k_{i}$
%\end{algorithmic}
%\end{algorithm}

\subsection{Analysis}
%\subsubsection{Approximation bound}

%We are ready to prove the approximation bound for Algorithm {\sc LocallyAdaptiveGreedy}:

\begin{lemma}
\label{lem:AdaptiveGreedy}
%Let $\texttt{OPT}_{}$ be the value of the optimal $\epsilon$-locally-adaptive policy of cost $K$. 
For any $\epsilon>0$, let $\mathcal{B}$ be the solution returned by \textsc{LocallyAdaptiveGreedy} with input $k,\epsilon$. Then, \mbox{$F(\mathcal{B})\geq \left (1-1/e - O \left( (\epsilon\sqrt{k})^{-2} \right)\right)\texttt{OPT}_{LA}^{\epsilon}$}.
\end{lemma}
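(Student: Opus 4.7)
The plan is to run the standard greedy analysis for submodular maximization, but with $\epsilon$-blocks playing the role of the atomic elements. Three ingredients suffice: (i) viewed as a set function on blocks (each block coming with a fixed per-realization strategy), $F$ is monotone and submodular; (ii) \textsc{FindOptimalAdaptiveBlock} actually returns a block of maximum marginal density, since it enumerates over all $(S,t)$ with $|S|\le 1/\epsilon^{2}$ and $t\le 2/\epsilon$ and for each such pair can, by brute force over the polynomially many size-$\le t$ subsets of $\mathcal{N}(S)\cap R_{i}$, compute the optimal second-stage set per realization; and (iii) the while-loop condition guarantees the greedy consumes total cost at least $k-3/\epsilon^{2}$.

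I would first verify submodularity and monotonicity of $F$ realization-by-realization. Setting $U_{i}^{\mathcal{A}}:=\bigcup_{B\in\mathcal{A}}T_{i,B}$, we have $U_{i}^{\mathcal{A}\cup\mathcal{A}'}=U_{i}^{\mathcal{A}}\cup U_{i}^{\mathcal{A}'}$ and $U_{i}^{\mathcal{A}\cap\mathcal{A}'}\subseteq U_{i}^{\mathcal{A}}\cap U_{i}^{\mathcal{A}'}$, so submodularity and monotonicity of $f$ give $f(U_{i}^{\mathcal{A}\cup\mathcal{A}'})+f(U_{i}^{\mathcal{A}\cap\mathcal{A}'})\le f(U_{i}^{\mathcal{A}})+f(U_{i}^{\mathcal{A}'})$; averaging over realizations yields submodularity of $F$ on blocks. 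Now fix any optimal policy $\mathcal{O}=\{O_{1}^{*},\ldots,O_{t}^{*}\}$ achieving $\texttt{OPT}_{LA}^{\epsilon}$, let $\mathcal{B}_{j}$ be the greedy solution after $j$ iterations, and let $c_{j}=\mathcal{C}(B_{j})$. The key density lemma is the chain
\[
\frac{F_{\mathcal{B}_{j-1}}(B_{j})}{c_{j}}\;\ge\;\max_{\ell}\frac{F_{\mathcal{B}_{j-1}}(O_{\ell}^{*})}{\mathcal{C}(O_{\ell}^{*})}\;\ge\;\frac{\sum_{\ell}F_{\mathcal{B}_{j-1}}(O_{\ell}^{*})}{\sum_{\ell}\mathcal{C}(O_{\ell}^{*})}\;\ge\;\frac{F_{\mathcal{B}_{j-1}}(\mathcal{O})}{k}\;\ge\;\frac{\texttt{OPT}_{LA}^{\epsilon}-F(\mathcal{B}_{j-1})}{k},
\]
where the first inequality uses enumerability plus per-realization re-optimization (the greedy considers $(S_{\ell}^{*},\max_{i}|T_{i,\ell}^{*}|)$ and its per-realization optimum can only improve on $O_{\ell}^{*}$'s strategy, at the same cost), the second is the weighted-average bound, the third combines subadditivity of marginals (from submodularity of $F$ on blocks) with $\sum_{\ell}\mathcal{C}(O_{\ell}^{*})\le k$, and the last is monotonicity of $F$.

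The standard induction then yields $\texttt{OPT}_{LA}^{\epsilon}-F(\mathcal{B}_{t})\le\prod_{j}(1-c_{j}/k)\,\texttt{OPT}_{LA}^{\epsilon}$. The stopping condition gives $\sum_{j}c_{j}\ge k-3/\epsilon^{2}$, so $\prod_{j}(1-c_{j}/k)\le\exp\bigl(-\sum_{j}c_{j}/k\bigr)\le e^{-1}\exp\bigl(3/(k\epsilon^{2})\bigr)\le e^{-1}\bigl(1+O(1/(k\epsilon^{2}))\bigr)$, giving the claimed $1-1/e-O((\epsilon\sqrt{k})^{-2})$ bound. The main obstacle is not technical but conceptual: one must convince oneself that adaptivity \emph{inside} a block does not break the submodular composition of blocks. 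The resolution is to think of each block as a pair $(S,\text{strategy})$ with the strategy fixed as a function of the realization; once one does so, the blocks behave like ordinary elements in standard submodular maximization and the greedy analysis carries through verbatim, with the only nonstandard piece being the slack $3/\epsilon^{2}$ contributed by the last block's maximum possible cost.
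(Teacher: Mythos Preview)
Your proof is correct and follows essentially the same approach as the paper's. Both arguments establish the key density inequality $F_{\mathcal{B}_{j-1}}(B_j)/\mathcal{C}(B_j)\ge (\texttt{OPT}_{LA}^{\epsilon}-F(\mathcal{B}_{j-1}))/k$ by decomposing the optimal policy into its blocks, using subadditivity of marginals (which the paper derives directly from submodularity of $f$ per realization, while you first package it as submodularity of $F$ on blocks), and observing that the enumeration over $(S,t)$ pairs with per-realization re-optimization dominates every fixed block of $\mathcal{O}$; the induction and the $k-3/\epsilon^2$ slack argument are identical, with your use of $\prod_j(1-c_j/k)\le\exp(-\sum_j c_j/k)$ being a slightly more direct route to the same bound than the paper's AM--GM step.
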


\begin{proof}
%It is convenient to compare our $\epsilon$-locally adaptive solution to an optimal adaptive solution that uses slightly less budget.
%Let $K'=K-\frac{1}{\epsilon^2}-\frac{2}{\epsilon}$. Notice that $OPT_{\epsilon,K'}\geq (1-O(\frac{1}{K\epsilon^4}))OPT_{\epsilon,K}$ as by submodularity we can remove the $O(\frac{1}{\epsilon^2})$ blocks (out of at least $\frac{K}{\epsilon^2}$) with the %least marignal contribution.
%Let $(W_{[l]},T_{[l]})$ denote an optimal solution with value $OPT_{\epsilon,K'}$.
%, where $\vec{W}$ has $\ell$ components.

We first show that the marginal density of the $\epsilon$-block chosen in each iteration of the algorithm is at least the marginal density of the optimal locally-adaptive policy. Note that we always have enough budget left to add a full sized block.
%Let $U^R(S_{[i]},k_{[i]})$ be the set of elements chosen by the adaptive greedy algorithm, given $(S_{[i]},k_{[i]})$, on a realization $R$.
Let $\mathcal{B}_j$ be the solution at the beginning of iteration $j$ and let $B_j$ be the block added at iteration $j$ with $k_j=\max_i(|T_{i,B_j}|)$. Let $\mathcal{B}_O$ denote an optimal solution with value $\texttt{OPT}_{LA}^{\epsilon}$. 
For every iteration of the algorithm we have that:
\ifFULL
\begin{align*}
\frac{F_{\mathcal{B}_j}(\mathcal{B}_O)}{k} 
 = & \frac{\E_{R_i} \left[ f_{\mathcal{T}_i(\mathcal{B}_j)}\left (\mathcal{T}_i(\mathcal{B}_O) \right ) \right] }{k} \hspace{1in} \\
  \leq & \frac{\sum_{B \in \mathcal{B}_O}\E_{R_i} \left[ f_{\mathcal{T}_i(\mathcal{B}_j)}\left (T_{i,B})\right )\right] }{\sum_{B \in \mathcal{B}_O}\mathcal{C}(B)}\\
  \leq &  \max_{B \in \mathcal{B}_O}\frac{\E_R \left[ f_{\mathcal{T}_i(\mathcal{B}_j)}\left (T_{i,B}\right ) \right] }{\mathcal{C}(B)}\\
 \leq & \frac{\E_{R_i} \left[ f_{\mathcal{T}_i(\mathcal{B}_j)}\left (T_{i,B_j}\right ) \right]}{|S_j|+k_j}\\
= &\frac{F_{\mathcal{B}_j}(B_{j})}{|S_j|+k_j}
\end{align*}
\else
\begin{align*}
\frac{F_{\mathcal{B}_j}(\mathcal{B}_O)}{k} 
 = & \frac{\E_{R_i} \left[ f_{\mathcal{T}_i(\mathcal{B}_j)}\left (\mathcal{T}_i(\mathcal{B}_O) \right ) \right] }{k} 
  \leq  \frac{\sum_{B \in \mathcal{B}_O}\E_{R_i} \left[ f_{\mathcal{T}_i(\mathcal{B}_j)}\left (T_{i,B})\right )\right] }{\sum_{B \in \mathcal{B}_O}\mathcal{C}(B)}\\
  \leq &  \max_{B \in \mathcal{B}_O}\frac{\E_R \left[ f_{\mathcal{T}_i(\mathcal{B}_j)}\left (T_{i,B}\right ) \right] }{\mathcal{C}(B)}
 \leq  \frac{\E_{R_i} \left[ f_{\mathcal{T}_i(\mathcal{B}_j)}\left (T_{i,B_j}\right ) \right]}{|S_j|+k_j}
= \frac{F_{\mathcal{B}_j}(B_{j})}{|S_j|+k_j}
\end{align*}
\fi
The first inequality is from the submodularity of $f$, %the second is simple math, 
and the third holds because the algorithm enumerates over all $\epsilon$-blocks as candidates in each iteration - including over the blocks of $\mathcal{B}_O$.

\ifFULL
We proceed by induction to show that at any iteration we have that
\else
By induction (see full version), this means that at any iteration we have that 
\fi

$$F(\mathcal{B}_{j+1})\geq \left (1- \prod_{l=1}^j \left (1- \frac{|S_l|+k_l}{k}\right )\right )\texttt{OPT}_{LA}^{\epsilon}$$

\ifFULL
The base case is trivial. Assume it is true for $F(\mathcal{B}_{j})$. Then we have that
\begin{align*}
F(\mathcal{B}_{j+1})\geq & F(\mathcal{B}_{j}) + \frac{|S_j|+k_j}{k} \left(\texttt{OPT}_{LA}^{\epsilon}-F({\mathcal{B}_j})\right )\\
= & \left (1-\frac{|S_j|+k_j}{k}\right )F(\mathcal{B}_{j})+\frac{|S_j|+k_j}{k}\texttt{OPT}_{LA}^{\epsilon}\\
\geq & \left(1-\frac{|S_j|+k_j}{k}\right) \left (1- \prod_{l=1}^{j-1} \left (1- \frac{|S_l|+k_l}{k} \right )\right )\texttt{OPT}_{LA}^{\epsilon}+\frac{|S_j|+k_j}{k}\texttt{OPT}_{LA}^{\epsilon}\\
= & \left (1- \prod_{l=1}^j \left (1- \frac{|S_l|+k_l}{k} \right )\right )\texttt{OPT}_{LA}^{\epsilon}
\end{align*}
where the first inequality is from the previous derivation and the second inequality is by the induction hypothesis.
\fi

When the algorithm ends the cost of the solution is greater than $k'  > \left(1- \frac{3}{k\epsilon^2} \right)k$.
Therefore, 
\ifFULL
\begin{align*}
F(\mathcal{B})
\geq &  \left (1- \prod_{j=1}^t \left (1- \frac{|S_j|+k_j}{k} \right )\right )\texttt{OPT}_{LA}^{\epsilon}\\
%= &  \left (1- \prod_{j=1}^t \big (1- \frac{k-\frac{1}{\epsilon^2}-\frac{2}{\epsilon}}{k}\frac{|S_j|+k_j}{k-\frac{1}{\epsilon^2}-\frac{2}{\epsilon}} \big )\right )OPT_{\epsilon,k}\\
\geq &\left (1- (1-  \frac{k'}{kt})^{t}\right )\texttt{OPT}_{LA}^{\epsilon}\\
\geq & \left (1-\frac{1}{e^{k'/k}}\right)\texttt{OPT}_{LA}^{\epsilon}\\
\geq &  \left (1-\frac{1}{e}-O\left(\frac{1}{k\epsilon^2}\right)\right)\texttt{OPT}_{LA}^{\epsilon}
\end{align*}
where the second inequality is 
\else
\begin{align*}
F(\mathcal{B})
\geq \left (1- (1-  \frac{k'}{kt})^{t}\right )\texttt{OPT}_{LA}^{\epsilon}
\geq  \left (1-\frac{1}{e^{k'/k}}\right)\texttt{OPT}_{LA}^{\epsilon}
\geq  \left (1-\frac{1}{e}-O\left(\frac{1}{k\epsilon^2}\right)\right)\texttt{OPT}_{LA}^{\epsilon}
\end{align*}
where the first inequality is from the previous line and
\fi
because the solution's cost is at least $k'$ and this expression is minimized when the cost is evenly spread over the iterations.
\end{proof}

\noindent\textbf{Adaptivity gap of $\epsilon$-locally-adaptive policies.}
%\label{sub:NA-to-locally-A}
%In the appendix % \ref{sec:adaptivity_gap}
%we use the concave closure (and in particular Lemma \ref{lem:f_plus}) 
%to bound the gap between the optimal adaptive and non-adaptive policies:
We now show that $\epsilon$-locally-adaptive policies can arbitrarily approximate non-adaptive policies, which implies our bound.
The high level idea is to show that there are good non-adaptive policies that have a \emph{block} structure and thus can be converted to locally-adaptive policies.
We first define the notion of $\epsilon$-local for non-adaptive policies:%
%\begin{definition}
%A non-adaptive policy $(S,T)$ is {\em $\epsilon$-local} if $S$  can be covered by sets $S_1\ldots S_\ell$ (not necessarily disjoint) and $T$ can be partitioned into sets $T_1\ldots T_\ell$, such that for every $j$: $T_j\subseteq \mathcal{N}(S_j)$; $S_j$ is of size at most $\frac{1}{\epsilon^2}$; and there is a $k_{j}$ such that  $\frac{1}{\epsilon}\leq k_{j} \leq\frac{2}{\epsilon}$ and $k_{j}\geq \sum_{j\in T_j}p_j$. The
%cost of an $\epsilon$-locally-relaxed solution is $\sum_{i} (|S_i| + k_i)$.
%\end{definition}
\begin{definition}
A \emph{budgeted} $\epsilon$-block is a triplet $(S,k,T)$ such that $S\subseteq X$ is of size at most $1/\epsilon^2$,$\frac{1}{\epsilon}\leq k\leq\frac{2}{\epsilon}$ and $T\subseteq \mathcal{N}(S)$ satisfies $ \sum_{j\in T}p_j \leq k$.
An {\em $\epsilon$-local} non-adaptive policy is a set $\mathcal{L}$ of budgeted $\epsilon$-blocks. The cost of $\mathcal{L}$ is $\mathcal{C}(\mathcal{L})=\sum_{B\in\mathcal{L}}|S_B|+k_B$. 
\end{definition}

We now prove that a non-adaptive policy can be converted into an $\epsilon$-local non-adaptive policy. Let $\mathcal{T}(\mathcal{L})$ be the union of all of $\mathcal{L}$'s blocks second stage sets.

%\begin{lemma}
%\label{lem:S_eps-q_eps}
%For any non-adaptive policy $(S,T)$ in which\footnote{When we only seed a constant number of nodes in the first stage, it is easy to approximate the optimal solution via standard methods.}
% $|S|>2/\epsilon^3$, there exists an $\epsilon$-local non-adaptive policy $(S^{\epsilon},T^{\epsilon})$ of the same cost such that $F(T^{\epsilon})\geq (1-2\epsilon)F(T)$.
%\end{lemma}
\begin{lemma}
\label{lem:S_eps-q_eps}
For any non-adaptive policy $(S,T)$ in which $|S|+\mathcal{C}(T)>3/\epsilon^3$, there exists an $\epsilon$-local non-adaptive policy $\mathcal{L}$ of the same cost such that $F(\mathcal{T}(\mathcal{L}))\geq (1-3\epsilon)F(T)$.
\end{lemma}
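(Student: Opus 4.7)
The plan is to convert $(S,T)$ into a union of valid budgeted $\epsilon$-blocks via a greedy partitioning procedure that pays only an $O(\epsilon)$ relative loss in value. I would first fix an arbitrary assignment $\mathrm{par}\colon T\to S$ with $j\in\mathcal{N}(\mathrm{par}(j))$ for each $j\in T$, and let $T_x=\{j:\mathrm{par}(j)=x\}$, giving a disjoint decomposition $T=\bigsqcup_{x\in S}T_x$ with $|S|+\mathcal{C}(T)=\sum_{x}(1+\mathcal{C}(T_x))$. Each parent $x\in S$ is then classified as \emph{heavy} ($\mathcal{C}(T_x)\ge 1/\epsilon$), \emph{medium} ($\epsilon\le\mathcal{C}(T_x)<1/\epsilon$), or \emph{tiny} ($\mathcal{C}(T_x)<\epsilon$), and handled by a separate packing strategy.

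For each heavy $x$ I would greedily cut $T_x$ into consecutive chunks of mass in $[1/\epsilon,2/\epsilon]$ by adding elements until the running mass first reaches $1/\epsilon$; this is feasible because each $p_j\le 1\le 1/\epsilon$ and a sub-$1/\epsilon$ trailing tail can be merged with its predecessor while staying $\le 2/\epsilon$ via a slight threshold tweak. Each chunk becomes a singleton-parent block $(\{x\},\mathcal{C}(T_x^i),T_x^i)$ with $|S_B|=1$. For medium parents I would sort the pairs $(x,T_x)$ in descending order of $\mathcal{C}(T_x)$ and greedily pack them into a block until $\mathcal{C}(T_B)\ge 1/\epsilon$, then close with $k_B=\mathcal{C}(T_B)$. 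The descending sort forces every parent contributing to the current block to have $\mathcal{C}(T_x)\ge\epsilon$, so the block fills within $1/\epsilon^2$ parents (giving $|S_B|\le 1/\epsilon^2$), and because each added parent contributes $<1/\epsilon$ the closing mass lies in $[1/\epsilon,2/\epsilon)$. For tiny parents I would form blocks with $|S_B|=1/\epsilon^2$ each, $T_B=\bigcup_{x\in S_B}T_x$, and $k_B=1/\epsilon$; each tiny block over-budgets by $k_B-\mathcal{C}(T_B)\le 1/\epsilon$, so the total cost inflation from the tiny stage is at most $|S_{\mathrm{tiny}}|\epsilon^2\cdot(1/\epsilon)\le|S|\epsilon$. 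Heavy blocks duplicate each heavy $x$ up to $\lceil\epsilon\,\mathcal{C}(T_x)\rceil$ times, adding at most $\epsilon\,\mathcal{C}(T)$ to $\sum_{B}|S_B|$. Hence for the raw construction $\mathcal{L}_0$ I expect $\mathcal{C}(\mathcal{L}_0)\le(1+\epsilon)\,(|S|+\mathcal{C}(T))$ while $\mathcal{T}(\mathcal{L}_0)=T$.

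To meet the cost bound I would then delete a subset of blocks from $\mathcal{L}_0$ of total cost at most $\epsilon(|S|+\mathcal{C}(T))$, choosing those with smallest density $F(T_B)/\mathcal{C}(B)$. Subadditivity of $F$ (inherited from submodularity of $f$) gives $\sum_B F(T_B)\ge F(\mathcal{T}(\mathcal{L}_0))=F(T)$, and a standard density/knapsack argument shows that removing the lowest-density $\epsilon$-cost slice loses at most an $O(\epsilon)$-fraction of $\sum_B F(T_B)$; combining with $F(T)-F(\mathcal{T}(\mathcal{L}))\le\sum_{B\text{ dropped}}F(T_B)$ (again by subadditivity) yields the desired $(1-3\epsilon)\,F(T)$ bound. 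The hypothesis $|S|+\mathcal{C}(T)>3/\epsilon^3$ is used precisely to absorb the additive $O(1/\epsilon)$ boundary losses from merging tails and fractional partial blocks into the $\epsilon$-cost slack. The main obstacle I foresee is controlling the value loss in the tiny regime: many tiny blocks can collectively carry non-negligible mass, and $F$ on indicator vectors does not obey a trivial mass-proportional bound. The density-based trimming step is what circumvents this, since below-average-density blocks contribute proportionally little value regardless of how the mass is distributed across them.
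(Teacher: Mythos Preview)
Your packing construction (heavy/medium/tiny classification) is a legitimate alternative to the paper's single sequential scan over $S$; both routes produce a collection of blocks with $\mathcal{T}(\mathcal{L}_0)=T$ and total cost at most $(1+O(\epsilon))k$, and your cost accounting for $\mathcal{L}_0$ is essentially correct.

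The genuine gap is in the deletion step. You drop blocks with smallest \emph{standalone} density $F(T_B)/\mathcal{C}(B)$ and deduce $\sum_{\text{dropped}} F(T_B)\le O(\epsilon)\sum_B F(T_B)$, then combine with $F(T)-F(\mathcal{T}(\mathcal{L}))\le\sum_{\text{dropped}}F(T_B)$. But this only gives $F(T)-F(\mathcal{T}(\mathcal{L}))\le O(\epsilon)\sum_B F(T_B)$, and for submodular $F$ the quantity $\sum_B F(T_B)$ can be arbitrarily larger than $F(T)$: take $F(A)=\min(|A|,1)$ with $m$ singleton blocks, so $\sum_B F(T_B)=m$ while $F(T)=1$. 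The subadditivity inequality $\sum_B F(T_B)\ge F(T)$ that you invoke points in the wrong direction and cannot close this gap.

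The fix---and this is precisely what the paper does---is to iteratively remove the block of smallest \emph{marginal} density, i.e.\ smallest $\bigl(F(\mathcal{T}(\mathcal{L}))-F(\mathcal{T}(\mathcal{L}\setminus\{B\}))\bigr)/\mathcal{C}(B)$ with respect to the \emph{current} $\mathcal{L}$. By submodularity, at any stage these removal marginals over all $B\in\mathcal{L}$ sum to at most $F(\mathcal{T}(\mathcal{L}))\le F(T)$, so the block chosen loses value at most $\mathcal{C}(B)\cdot F(T)/\mathcal{C}(\mathcal{L})\le \mathcal{C}(B)\cdot F(T)/k$ (we only remove while $\mathcal{C}(\mathcal{L})>k$). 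Summing over removals of total cost at most $3\epsilon k$ yields loss at most $3\epsilon F(T)$, which is the bound you need.
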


\begin{proof}

Let $k=|S|+\mathcal{C}(T)$. Fix some order $(x_1\ldots x_{|S|})$ on the elements of $S$. 
We say that a second-stage node \mbox{$y\in \mathcal{N}(S)\cap T$} {\em belongs} to $x_j$ 
if $x_j$ is the smallest-indexed node in $S$ with an edge to $y$.
For a set $Q\subseteq S$, let $\mathcal{R}(Q)$ be the nodes that belong to nodes in $Q$.
Iteratively construct the blocks of $\mathcal{L}$ by adding nodes from $S$ into sets $S_i$ by the order $(x_1\ldots x_{|S|})$, 
until either
\begin{itemize}
\item $2/\epsilon\geq \mathcal{C}(\mathcal{R}(S_i))>{1}/{\epsilon}$ - set $T_i=\mathcal{R}(S_i) $ and $k_i= \mathcal{C}(\mathcal{R}(S_i))$.
\item $\mathcal{C}(\mathcal{R}(S_i))>{2}/{\epsilon}$ - 
let $\Phi$ be the maximal set of nodes of $\mathcal{R}(x)$, where $x$ is the last node added to $S_i$, that can be added to $\mathcal{R}(S_i\setminus x)$ such that the cost remains under $2/\epsilon$ (there is at least one such node since a node's cost is at most $1$); set $T_i=\mathcal{R}(S_i\setminus x) \cup \Phi$ and $k_i= \mathcal{C}(T_i)$;
start the next set $S_{i+1}$ again with $x$ but ignore the nodes in $\Phi$. 
\item $|S_i|={1}/{\epsilon^2}$ or we are done - set  $T_i=\mathcal{R}(S_i)$ and $k_i={1}/{\epsilon}$. 
\end{itemize}

The first condition incurs no extra cost. 
The second condition applies at most $\lceil\epsilon\mathcal{C}(T)\rceil$ times and incurs an extra cost of $1$ for the duplicated node, and in total at most $\epsilon\mathcal{C}(T)+1$.
The third condition applies at most $\lceil\epsilon^2|S|\rceil$ times and it incurs an extra cost of at most $1/\epsilon$, and in total at most $\epsilon|S|+{1}/{\epsilon}$. 
The total additional cost incurred is therefore at most $\epsilon k+1+1/\epsilon\leq 2\epsilon k$.

%Finally, if the last set is too small, simply
% add it to a set in which we stopped when the set size reached $\frac{1}{\epsilon^2}$. If no such set exist just
%set $k_i =\frac{1}{\epsilon}$.
%(Notice that if no such set exists we didn't spend any of the extra cost on such sets so we can do it on this set).

%Overall, our combined extra cost is at most $\epsilon \sum_{j\in T'}p_i +\lfloor\epsilon|S'|\rfloor+{1}/{\epsilon}$, which, for sufficiently large $|S|$, is less than what we saved by reducing $(S,T)$ to $(S',T')$. 

Iteratively remove from $\mathcal{L}$ the blocks whose marginal density is the lowest until $\mathcal{C}(\mathcal{L})\leq k$. Their total cost at most $2\epsilon k+(1/\epsilon^2+2/\epsilon)\leq 3\epsilon k$.
Thus, by submodularity, $F(\mathcal{T}(\mathcal{L}))\geq (1-3\epsilon)F(T)$. 
\end{proof}

Our last step is to show that given an $\epsilon$-local non-adaptive policy, we can construct an $\epsilon$-locally-adaptive policy with almost the same value. 
\ifFULL
The following lemma follows from Lemma~\ref{lemma:nonAdaptToAdapt} 
\else
The following lemma follows from the proof of Theorem~\ref{thm:NAtoA} 
\fi
by analyzing each block of the policy independently. 
\ifFULL
The $\epsilon^4$ is needed to match the condition on the budget left for the second stage in that Lemma.
\fi
%\begin{restatable}{lemma}{nonAdaptToLocalAdapt}\label{lemma:nonAdaptToLocalAdapt}
%For any $\epsilon$ such that $\epsilon \in (0,1/5)$, and for any $\epsilon^4$-local non-adaptive policy $\left(S^{\epsilon}, T^{\epsilon}\right)$ 
%there exist an $\epsilon^4$-locally-adaptive policy $\mathcal{B}$ with value 
%\mbox{$\geq \big (1-\epsilon\big) (1-\exp(\frac{-\epsilon^{-1}-2}{3})\big)F(S^{\epsilon}, T^{\epsilon})$}.
%\end{restatable}

%\begin{restatable}{lemma}{nonAdaptToLocalAdapt}\label{lemma:nonAdaptToLocalAdapt}
%For any $\epsilon$ such that $\epsilon \in (0,1/5)$, and for any $\epsilon^4$-local non-adaptive policy $\left(S^{\epsilon}, T^{\epsilon}\right)$ 
%there exist an $\epsilon^4$-locally-adaptive policy $\mathcal{B}$ with value 
%\mbox{$\geq \big (1-2\epsilon\big)F( T^{\epsilon})$}.
%\end{restatable}
\begin{lemma}\label{lemma:nonAdaptToLocalAdapt}
For any $\epsilon$ such that $\epsilon \in (0,1/5)$, and for any $\epsilon^4$-local non-adaptive policy $\mathcal{L}$ 
there exists an $\epsilon^4$-locally-adaptive policy $\mathcal{B}$ with value 
\mbox{$\geq \left (1-2\epsilon\right)F(\mathcal{T}(\mathcal{L}))$}.
\end{lemma}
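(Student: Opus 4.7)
The plan is to reuse the contention resolution scheme from the proof of Lemma~\ref{lemma:nonAdaptToAdapt} block by block. The $\epsilon^4$ scaling in the hypothesis is chosen precisely so that, for each budgeted block of the $\epsilon^4$-local non-adaptive policy, the second-stage budget $k_B$ exceeds the threshold $\epsilon^{-4}$ that the Chernoff estimate in Lemma~\ref{lemma:nonAdaptToAdapt} requires.

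Concretely, given the $\epsilon^4$-local non-adaptive policy $\mathcal{L}=\{(S_B,k_B,T_B)\}_B$, I would build an $\epsilon^4$-locally-adaptive policy $\mathcal{B}$ with one adaptive block per budgeted block of $\mathcal{L}$. The first-stage set of the adaptive block associated with $(S_B,k_B,T_B)$ is $S_B$, whose size is at most $1/(\epsilon^4)^2=1/\epsilon^{8}$ as required. In the second stage, for each realization $R_i$ and each block $B$, draw $\hat T_{i,B}$ by independently keeping each node in $R_i\cap T_B$ with probability $1-\epsilon$, and seed $\hat T_{i,B}$ only if $|\hat T_{i,B}|\le k_B$; otherwise this block seeds nothing. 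Since $k_B\le 2/\epsilon^4$, this keeps the second-stage size within the $2/\epsilon^4$ allowance of the definition of an (adaptive) $\epsilon^4$-block.

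For the value guarantee, I would repeat the Chernoff calculation from Lemma~\ref{lemma:nonAdaptToAdapt} applied to the block $B$: using $\sum_{j\in T_B}\myvec{p}_j\le k_B$ and $k_B\ge 1/\epsilon^4=\epsilon^{-4}$, the probability that a realized node $j\in T_B$ actually ends up seeded is at least $(1-\epsilon)^2\ge 1-2\epsilon$. Moreover, in every realization the resulting set is a subset of $T_B\cap R_i$, and the inclusion probability of $j$ is non-increasing in the realization $R_i\supseteq\{j\}$ (larger realizations only make the block-cap constraint more likely to fail). These are precisely the monotonicity and $(1-2\epsilon)$-balancedness conditions needed to invoke the monotone contention resolution scheme of \cite{CVZ11}, exactly as in Lemma~\ref{lemma:nonAdaptToAdapt}, applied to the random set $\bigcup_B\hat T_{i,B}$ and the submodular function $f$. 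The conclusion is $F(\mathcal{B})\ge (1-2\epsilon)\,F(\mathcal{T}(\mathcal{L}))$.

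The only delicate point I anticipate is checking that the per-block filtering yields a valid global monotone CRS, rather than applying it to each block in isolation and then trying to sum values (which would fail because $f$ need not be additive across blocks). Since the random seeding decisions in different blocks use disjoint randomness and disjoint second-stage constraints, the joint inclusion probabilities decompose over blocks, and the global monotonicity condition holds because each component is individually monotone in its own realization; this is the step I would verify carefully but expect to go through routinely.
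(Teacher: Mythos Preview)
Your proposal is correct and matches the paper's approach: the paper's proof is literally the one-line remark that the lemma ``follows from Lemma~\ref{lemma:nonAdaptToAdapt} by analyzing each block of the policy independently,'' together with the observation that the $\epsilon^4$ scaling is what makes each block's second-stage budget $k_B\ge \epsilon^{-4}$ meet the hypothesis of Lemma~\ref{lemma:nonAdaptToAdapt}. Your write-up is in fact more careful than the paper on the one point that matters---you correctly note that one cannot simply apply Lemma~\ref{lemma:nonAdaptToAdapt} as a black box per block and then add up values (since $f$ is not additive across blocks), and your fix, namely that the per-block filtering defines a single global monotone $(1-2\epsilon)$-balanced contention resolution scheme for the partition-matroid constraint $\{|T\cap T_B|\le k_B\ \forall B\}$, is exactly right; the monotonicity check goes through because enlarging the realization can only make each block's cap more likely to bind.
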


%The combination of these lemmas as explained in the proof outline of Theorem~\ref{thm:main} gives us our main result when $K$ is large enough (larger than some constant). As noted before, if $K$ is not large enough we can find the optimal adaptive policy by brute force search.
Putting it all together, we have that for every constant $\epsilon >0$ there exist $\epsilon_1,\epsilon_2,\epsilon_3$ such that:
\begin{gather*}
F(\mathcal{B}) \geq \left (1-{1}/{e}-\epsilon_1 \right ) \texttt{OPT}_{LA}^{\epsilon_2}
\geq \left (1-{1}/{e}-\epsilon_3 \right ) \texttt{OPT}_{NA} \geq \left ((1-{1}/{e})^2-\epsilon \right ) \texttt{OPT}_A
\end{gather*}
where the first inequality follows from Lemma \ref{lem:AdaptiveGreedy}, the second from Lemma \ref{lemma:nonAdaptToLocalAdapt} and the third from the results of the previous section.
This completes our proof of Theorem \ref{thm:main}.
\qed

\subsection{Separation between $\epsilon$-locally-adaptive and adaptive policies}\label{sec:localGap}
We complement this section with an example that exhibits a gap between the values of the optimal $\epsilon$-local-adaptive policy and the optimal adaptive policy. 
It remains open whether the optimal gap is $(1-1/e)$ as our upper bound suggests.
Any better upper bound on the gap will immediately imply a better approximation bound for Algorithm \textsc{LocallyAdaptiveGreedy}.
\begin{lemma}
There are instances where every $\epsilon$-locally-adaptive policy achieves at most $\approx 0.853$ of the value of the optimal adaptive policy.
\end{lemma}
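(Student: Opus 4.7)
The plan is to exhibit a single family of instances, parameterized by a size $n\to\infty$ and a small probability $p = \Theta(1/n)$, on which the optimal adaptive policy strictly outperforms every $\epsilon$-locally-adaptive policy by a factor approaching the claimed $\approx 0.853$. The intuition we want to capture is the one structural difference that survives the reduction in Section~\ref{sec:local-adaptive}: inside a block a locally-adaptive policy enjoys full adaptivity, but second-stage budget cannot be moved between blocks once the block structure is fixed in the first stage. A fully adaptive policy can instead pool its entire second-stage budget across all realized neighbors and reassign it wherever the realization is luckiest.

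First, I would construct the instance. Take $X = \{x_1,\dots,x_m\}$ (for $m$ growing with $n$), give each $x_i$ its own disjoint neighborhood $N_i$ of $n$ children, and set $p_j = 1/n$ for every $j \in N_i$, so that each realized neighborhood looks like a $\mathbf{Pois}(1)$-sized subset. Pick $f$ to be a monotone submodular function that rewards covering children across many parents in a way that is concave but strictly more valuable when the second-stage budget is redistributed according to the realization -- a natural candidate is $f(T) = \sum_i g\bigl(|T \cap N_i|\bigr)$ for a carefully chosen nondecreasing concave $g$ (e.g.\ $g(0)=0, g(1) = 1, g(t) = 1 + \alpha(t-1)$ for $t\ge 1$ with $\alpha$ small), possibly combined with a global coverage term. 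The budget $k$ is set so that the optimum must seed all of $X$ in the first stage and distribute the remaining budget across second-stage neighbors.

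Next I would compute the two values. For the optimal adaptive policy $\mathtt{OPT}_A$, the analysis is a direct expectation: condition on the random profile of realized neighborhoods and solve the resulting deterministic submodular-maximization problem over all realized children, which is tractable by the structure of $f$. For $\mathtt{OPT}_{LA}^{\epsilon}$, the key observation is that any $\epsilon$-locally-adaptive policy is a partition of $X$ into blocks of size at most $1/\epsilon^2$, each coming with its own second-stage budget fixed in advance; so the second-stage decisions in block $B$ only see the realization restricted to $\mathcal{N}(S_B)$. I would argue that for this instance no block partition can match the global reallocation done by $\mathtt{OPT}_A$: by convexity of the per-block optimum as a function of the number of realized neighbors, splitting the budget block-by-block strictly loses value, and the loss persists in the limit $\epsilon \to 0$ because each block still contains only $O(1/\epsilon^2)$ parents while the Poisson fluctuations that the adaptive policy exploits have a global scale. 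The resulting comparison evaluates, in the limit, to a clean integral/series of $\mathbf{Pois}(1)$-weighted marginal contributions whose numerical value is $\approx 0.853$.

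The main obstacle I expect is in the third step: proving that no clever locally-adaptive policy can partially simulate global redistribution. A locally-adaptive policy is allowed to use overlapping blocks and to order them, so within a long prefix of blocks it does see a nontrivial slice of the realization; one must show that this extra power is worth at most $o_\epsilon(1)$. I would handle this via an averaging/exchange argument: fix an arbitrary $\epsilon$-locally-adaptive policy $\mathcal{B}$ on the instance, and for each realization write its value as a sum over blocks of their marginal contribution given the prior blocks; then bound each block's marginal by the optimal block-local allocation against the conditional distribution of realized children, and apply the per-block concavity identity to collapse the sum into the same integral that gives the adaptive lower bound, times an explicit factor $\le 0.853 + o(1)$. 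Taking $\epsilon\to 0$ and $n\to\infty$ and choosing the parameters of $g$ to equalize the two evaluations gives the stated ratio, completing the separation.
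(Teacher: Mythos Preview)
Your proposal has a genuine structural gap: the separable additive objective $f(T)=\sum_i g(|T\cap N_i|)$ with concave $g$ cannot produce a separation that survives as $\epsilon\to 0$. The reason is exactly the averaging you invoke, but it cuts the wrong way. For a block containing $s\le 1/\epsilon^2$ parents with iid $\mathrm{Pois}(1)$ realized neighborhoods and pre-committed budget $rs$, the block's optimal adaptive value per parent converges (by the law of large numbers) to the same limit as the fully adaptive per-parent value at global budget ratio $r$. Concretely, with $g(t)=\min(t,1)$ and $b=m/2$, a single-parent block gets $(1-1/e)/1$ of adaptive, but a block of size $s$ gets $\E[\min(s/2,\mathrm{Bin}(s,1-1/e))]/(s/2)\to 1$ as $s\to\infty$. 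So the ratio you compute is $1-O(\epsilon)$, not a fixed constant, and certainly not the specific number $0.853$; you never explain where that number would come from in your construction. Your claim that ``the loss persists in the limit $\epsilon\to 0$ because \dots the Poisson fluctuations the adaptive policy exploits have a global scale'' is backwards: global fluctuations are $O(\sqrt m)$, hence a vanishing fraction of the mean, and the adaptive advantage comes from \emph{local} reallocation, which large blocks already capture internally.

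The paper's construction avoids this by abandoning additivity entirely. It takes $f(T)=1-\prod_{x}(1-\tfrac12|T\cap\{y_x\}|-\tfrac1{2m^2}|T\cap Z_x|)$, a product (not sum) over parents, together with one low-probability ``special'' child $y_x$ per parent and $m^2$ probability-$1$ ``regular'' children $Z_x$. The point of the product is winner-take-all: once some $y_x$ realizes, the fully adaptive policy pours its entire $\Theta(m^2)$ second-stage budget into that single $Z_x$ and drives the corresponding factor to zero, getting $f=1$. No $\epsilon$-block can imitate this, not because it cannot see the realization (your statement that a block ``only sees the realization restricted to $\mathcal N(S_B)$'' is incorrect; blocks may condition on all of $R_i$), but because its second-stage budget is hard-capped at $2/\epsilon\ll m^2$. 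A union bound then shows that, with $m\gg 1/\epsilon^2$, almost all of the locally-adaptive budget sits in blocks whose special child did \emph{not} realize, so the regular budget is committed before the ``lucky'' $x$ is known. The $0.853$ is then the ratio of two explicit Poisson computations, $(1-\tfrac12 e^{-1/2})/(1-\tfrac1{2e})$. The essential idea you are missing is this combination of a multiplicative objective that rewards concentrating the whole budget at one random site, and a per-site demand ($m^2$) that dwarfs the per-block cap ($2/\epsilon$).
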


\begin{proof}
We construct an instance where it is advantageous to move large amounts of the budget after seeing the realizations.
Since this is only possible in a fully adaptive (i.e. not $\epsilon$-locally-adaptive) policy, we obtain a separation between the two classes of solutions.

Let $m$ be a large parameter, and consider an instance where the first stage has $|X|=m$ nodes.
Each node $x \in X$ is connected to one {\em special} node $y_x \in Y$, and it also has $m^2$ {\em regular} neighbors $z_x^i \in Z_x \subset Z$.
The special nodes in $Y$ realize with probability $1/m$, while the regular nodes in $Z$ realize with probability $1$.
Let the budget be $k = m^2+m+1$.
Finally, consider the 
\ifFULL
function:
\else
submodular function (see full version for the proof this is indeed a submodular function):
\fi
\[
f(T) = 1- \prod_{x\in X}\left( 1- \frac{1}{2}|T \cap {y_x}| - \frac{1}{2m^2}|T \cap Z_x| \right).
\]
\ifFULL
To see that $f$ is indeed submodular, consider the following observation by Dughmi and Vondrak \cite{DV11}:
For any monotone submodular functions $g_1$ and $g_2$ with values in $[0,1]$, $g(S) := 1- (1-g_1(S))(1-g_2(S))$ is also monotone submodular. 
Applying this lemma recursively, we see that for any number of monotone submodular functions $g_1 \dots g_t$ with values in $[0,1]$, 
$g(S) := 1- \prod_i(1-g_i(S))$ is also monotone submodular. Finally, $f$ is submodular since it can be written in this form for coverage functions $f_x(T) = \frac{1}{2}|T \cap {y_x}| - \frac{1}{2m^2}|T \cap Z_x|$.
\fi
\begin{description}
\item[Optimal adaptive:] 
The optimal adaptive solution seeds all the first-stage nodes. 
With probability $1-1/e$, one of the special nodes $y_x$ realizes, 
in which case the optimal policy seeds $y_x$, 
as well as all the nodes $z_x \in Z_x$. In this case, $f(T) = 1$.
With probability $1/e$, none of the special nodes realize. 
In this case, the adaptive policy picks $x$ arbitrarily, and seeds all the nodes $z_x \in Z_x$.
In this case $f(T) = 1/2$.
In expectation, $\E[f(T)] = 1-1/(2e)$.

\item[$\epsilon$-locally adaptive:]
Given a realization, we say that a block $S_i$ is {\em special} 
if it contains a node with a realized special neighbor, and {\em regular} otherwise.
Observe that in any locally adaptive policy, 
the probability that a block $S_i$ is special is less than $1/(m\epsilon^2)$ by union bound.
Therefore, the expected total budget $k_i$ 
allocated to special blocks $S_i$ is at most $O(m/\epsilon^2)$, 
i.e. negligible in comparison to $|Z_x| = m^2$.
Therefore any locally adaptive policy must spend all but a negligible amount of its budget on regular subsets. 
(It may still seed all the realized special nodes, but with only a negligible fraction of their neighbors).
Let $\xi$ denote the number of realized special nodes. 
Conditioning on $\xi$, we have,
\begin{gather*}
\E\left[f\left(T\right) \mid \xi \right] 
 \leq 1- 2^{-\xi}\cdot \prod_{x\in X}\left( 1- \frac{1}{2m^2}|T \cap Z_x| \right) + o(1).
\end{gather*}
The optimal way to spend the budget on regular blocks is to pick some $x$ arbitrarily, 
and seed (almost) all the nodes $z_x \in Z_x$, 
leaving sufficient budget to seed any realized special nodes.
In particular, for any feasible $T$,
\begin{gather*}
\prod_{x\in X}\left( 1- \frac{1}{2m^2}|T \cap Z_x| \right)  
\geq  1- \sum_{x\in X} \frac{1}{2m^2}|T \cap Z_x| 
\geq 1/2.
\end{gather*} 
Finally, notice that the distribution of $\xi$ is approximately ${\bf Pois}(1)$.  The expected value of the locally adaptive policy is therefore:
\ifFULL
\begin{eqnarray*}
\E\left[f\left(T\right)\right] 
%& \approx & 1-\frac{1}{2}\cdot \sum_{i=0}^{\infty}\Pr\left[\xi=i\right]2^{-i}\\
 & \approx & 1-\frac{1}{2}\cdot\sum_{i=0}^{\infty}\Pr\left[\mathbf{Pois}\left(1\right)=i\right]2^{-i}\\
 & = & 1-\frac{1}{2}\cdot\sum_{i=0}^{\infty}\frac{1}{i!}e^{-1}2^{-i}\\
 & = & 1-\frac{1}{2}\left(\sum_{i=0}^{\infty}\frac{\left(\frac{1}{2}\right)^{i}}{i!}e^{-1/2}\right)\cdot e^{-1/2}\\
&= & 1-\frac{1}{2}e^{-1/2}
\end{eqnarray*}
\else
\begin{eqnarray*}
\E\left[f\left(T\right)\right] 
%& \approx & 1-\frac{1}{2}\cdot \sum_{i=0}^{\infty}\Pr\left[\xi=i\right]2^{-i}\\
 & \approx & 1-\frac{1}{2}\cdot\sum_{i=0}^{\infty}\Pr\left[\mathbf{Pois}\left(1\right)=i\right]2^{-i}
  =  1-\frac{1}{2}\cdot\sum_{i=0}^{\infty}\frac{1}{i!}e^{-1}2^{-i} \\
 & = & 1-\frac{1}{2}\left(\sum_{i=0}^{\infty}\frac{\left(\frac{1}{2}\right)^{i}}{i!}e^{-1/2}\right)\cdot e^{-1/2}
=  1-\frac{1}{2}e^{-1/2}
\end{eqnarray*}
\fi
\end{description}
Thus the ratio between the values of the policies is
$
\frac{1-1/(2e)}{1-\frac{1}{2}e^{-1/2}} \approx 0.853.
$
\end{proof}

%\ifFULL
%Thus the ratio between the values of the locally adaptive and the optimal policies is:
%$$
%\frac{1-1/(2e)}{1-\frac{1}{2}e^{-1/2}} \approx 0.853.
%$$
%\else
%Thus the ratio between the values of the policies is
%$
%\frac{1-1/(2e)}{1-\frac{1}{2}e^{-1/2}} \approx 0.853.
%$
%\fi
%\end{proof}

%\end{document}

\clearpage
\bibliographystyle{abbrv}
\bibliography{locally_adaptive}

\begin{thebibliography}{10}

\bibitem{AS04}
Alexander~A. Ageev and Maxim Sviridenko.
\newblock Pipage rounding: A new method of constructing algorithms with proven
  performance guarantee.
\newblock {\em J. Comb. Optim.}, 8(3):307--328, 2004.

\bibitem{densest_k-subgraph_AAMMW11}
Noga Alon, Sanjeev Arora, Rajsekar Manokaran, Dana Moshkovitz, and Omri
  Weinstein.
\newblock Inapproximabilty of densest k-subgraph from average case hardness,
  2011.

\bibitem{asadpour2008stochastic}
Arash Asadpour, Hamid Nazerzadeh, and Amin Saberi.
\newblock Stochastic submodular maximization.
\newblock In {\em Internet and Network Economics}, pages 477--489. Springer,
  2008.

\bibitem{BHMW11}
Eytan Bakshy, Jake~M. Hofman, Winter~A. Mason, and Duncan~J. Watts.
\newblock Everyone's an influencer: quantifying influence on twitter.
\newblock In {\em WSDM}, 2011.

\bibitem{borgs2012influence}
Christian Borgs, Michael Brautbar, Jennifer Chayes, and Brendan Lucier.
\newblock Maximizing social influence in nearly optimal time.
\newblock In {\em Proceedings of the ACM-SIAM Symposium on Discrete Algorithms,
  SODA}, volume~14. SIAM, 2014.

\bibitem{CCPV07}
Gruia Calinescu, Chandra Chekuri, Martin P{\'a}l, and Jan Vondr{\'a}k.
\newblock Maximizing a submodular set function subject to a matroid constraint.
\newblock In {\em Integer programming and combinatorial optimization}, pages
  182--196. Springer, 2007.

\bibitem{CVZ11}
Chandra Chekuri, Jan Vondr{\'a}k, and Rico Zenklusen.
\newblock Submodular function maximization via the multilinear relaxation and
  contention resolution schemes.
\newblock In {\em Proceedings of the 43rd annual ACM symposium on Theory of
  computing}, pages 783--792. ACM, 2011.

\bibitem{C08}
Ning Chen.
\newblock On the approximability of influence in social networks.
\newblock In {\em SODA}, pages 1029--1037, 2008.

\bibitem{CLLR15-AIM}
Wei Chen, Fu~Li, Tian Lin, and Aviad Rubinstein.
\newblock {Combining Traditional Marketing and Viral Marketing with Amphibious
  Influence Maximization}.
\newblock In {\em Proceedings of the Sixteenth {ACM} Conference on Economics
  and Computation, {EC} '15, Portland, OR, USA, June 15-19, 2015}, pages
  779--796, 2015.

\bibitem{dean2004approximating}
Brian~C Dean, Michel~X Goemans, and J~Vondr{\'a}k.
\newblock Approximating the stochastic knapsack problem: The benefit of
  adaptivity.
\newblock In {\em Foundations of Computer Science, 2004. Proceedings. 45th
  Annual IEEE Symposium on}, pages 208--217. IEEE, 2004.

\bibitem{DR01}
Pedro Domingos and Matthew Richardson.
\newblock Mining the network value of customers.
\newblock In {\em KDD}, pages 57--66, 2001.

\bibitem{DRY11}
Shaddin Dughmi, Tim Roughgarden, and Qiqi Yan.
\newblock From convex optimization to randomized mechanisms: toward optimal
  combinatorial auctions.
\newblock In {\em Proceedings of the 43rd annual ACM symposium on Theory of
  computing}, STOC '11, pages 149--158, New York, NY, USA, 2011. ACM.

\bibitem{DV11}
Shaddin Dughmi and Jan Vondr{\'a}k.
\newblock Limitations of randomized mechanisms for combinatorial auctions.
\newblock In {\em Foundations of Computer Science (FOCS), 2011 IEEE 52nd Annual
  Symposium on}, pages 502--511. IEEE, 2011.

\bibitem{F98}
Uriel Feige.
\newblock A threshold of ln n for approximating set cover.
\newblock {\em Journal of the ACM}, 45(4):634--652, 1998.

\bibitem{feld1991}
Scott~L Feld.
\newblock Why your friends have more friends than you do.
\newblock {\em American Journal of Sociology}, pages 1464--1477, 1991.

\bibitem{golovin2011adaptive}
Daniel Golovin and Andreas Krause.
\newblock Adaptive submodularity: Theory and applications in active learning
  and stochastic optimization.
\newblock {\em Journal of Artificial Intelligence Research}, 42(1):427--486,
  2011.

\bibitem{RLK10}
Manuel Gomez-Rodriguez, Jure Leskovec, and Andreas Krause.
\newblock Inferring networks of diffusion and influence.
\newblock In {\em KDD}, pages 1019--1028, 2010.

\bibitem{gupta2012approximation}
Anupam Gupta, Ravishankar Krishnaswamy, Viswanath Nagarajan, and R~Ravi.
\newblock Approximation algorithms for stochastic orienteering.
\newblock In {\em Proceedings of the Twenty-Third Annual ACM-SIAM Symposium on
  Discrete Algorithms}, pages 1522--1538. SIAM, 2012.

\bibitem{gupta2004boosted}
Anupam Gupta, Martin P{\'a}l, R~Ravi, and Amitabh Sinha.
\newblock Boosted sampling: approximation algorithms for stochastic
  optimization.
\newblock In {\em Proceedings of the thirty-sixth annual ACM symposium on
  Theory of computing}, pages 417--426. ACM, 2004.

\bibitem{HS15}
Thibaut Horel and Yaron Singer.
\newblock Scalable methods for adaptively seeding a social network.
\newblock {\em WWW}, 2015.

\bibitem{immorlica2004costs}
Nicole Immorlica, David Karger, Maria Minkoff, and Vahab~S Mirrokni.
\newblock On the costs and benefits of procrastination: approximation
  algorithms for stochastic combinatorial optimization problems.
\newblock In {\em Proceedings of the fifteenth annual ACM-SIAM symposium on
  Discrete algorithms}, pages 691--700. Society for Industrial and Applied
  Mathematics, 2004.

\bibitem{KKT03}
David Kempe, Jon~M. Kleinberg, and {\'E}va Tardos.
\newblock Maximizing the spread of influence through a social network.
\newblock In {\em KDD}, pages 137--146, 2003.

\bibitem{kleinberg2000allocating}
Jon Kleinberg, Yuval Rabani, and {\'E}va Tardos.
\newblock Allocating bandwidth for bursty connections.
\newblock {\em SIAM Journal on Computing}, 30(1):191--217, 2000.

\bibitem{LS15}
Silvio Lattanzi and Yaron Singer.
\newblock The power of random neighbors in social networks.
\newblock {\em WSDM}, 2015.

\bibitem{LeCam60_poisson_binomial}
Lucien Le~Cam.
\newblock An approximation theorem for the poisson binomial distribution.
\newblock {\em Pacific Journal of Mathematics}, 10(4):1181--1197, 1960.

\bibitem{LAH06}
Jure Leskovec, Lada~A. Adamic, and Bernardo~A. Huberman.
\newblock The dynamics of viral marketing.
\newblock In {\em ACM Conference on Electronic Commerce}, pages 228--237, 2006.

\bibitem{LKGFVG07}
Jure Leskovec, Andreas Krause, Carlos Guestrin, Christos Faloutsos, Jeanne~M.
  VanBriesen, and Natalie~S. Glance.
\newblock Cost-effective outbreak detection in networks.
\newblock In {\em KDD}, pages 420--429, 2007.

\bibitem{KDD11}
Michael Mathioudakis, Francesco Bonchi, Carlos Castillo, Aristides Gionis, and
  Antti Ukkonen.
\newblock Sparsification of influence networks.
\newblock In {\em KDD}, 2011.

\bibitem{MSV08}
Vahab~S. Mirrokni, Michael Schapira, and Jan Vondr{\'a}k.
\newblock Tight information-theoretic lower bounds for welfare maximization in
  combinatorial auctions.
\newblock In {\em ACM Conference on Electronic Commerce}, pages 70--77, 2008.

\bibitem{MU05}
Michael Mitzenmacher and Eli Upfal.
\newblock {\em Probability and Computing: Randomized Algorithms and
  Probabilistic Analysis}.
\newblock Cambridge University Press, New York, NY, USA, 2005.

\bibitem{MR07}
Elchanan Mossel and S{\'e}bastien Roch.
\newblock On the submodularity of influence in social networks.
\newblock In {\em STOC}, pages 128--134, 2007.

\bibitem{NWF78}
G.~L. Nemhauser, L.~A. Wolsey, and M.~L. Fisher.
\newblock An analysis of approximations for maximizing submodular set functions
  ii.
\newblock {\em Math. Programming Study 8}, pages 73--87, 1978.

\bibitem{ravi2004hedging}
R~Ravi and Amitabh Sinha.
\newblock Hedging uncertainty: Approximation algorithms for stochastic
  optimization problems.
\newblock In {\em Integer programming and combinatorial optimization}, pages
  101--115. Springer, 2004.

\bibitem{RD02}
Matthew Richardson and Pedro Domingos.
\newblock Mining knowledge-sharing sites for viral marketing.
\newblock In {\em KDD}, pages 61--70, 2002.

\bibitem{RSS14}
Aviad Rubinstein, Lior Seeman, and Yaron Singer.
\newblock Approximability of adaptive seeding under knapsack constraints.
\newblock In {\em Proceedings of the Sixteenth ACM Conference on Economics and
  Computation}, EC '15, pages 797--814, 2015.

\bibitem{SS13}
Lior Seeman and Yaron Singer.
\newblock Adaptive seeding in social networks.
\newblock In {\em Proceedings of the The 54th Annual Symposium on Foundations
  of Computer Science}, 2013.

\bibitem{shapiro2009lectures}
Alexander Shapiro, Darinka Dentcheva, and Andrzej Ruszczy{\'n}ski.
\newblock {\em Lectures on stochastic programming: modeling and theory},
  volume~9.
\newblock Society for Industrial Mathematics, 2009.

\bibitem{shmoys2004stochastic}
David~B Shmoys and Chaitanya Swamy.
\newblock Stochastic optimization is (almost) as easy as deterministic
  optimization.
\newblock In {\em Foundations of Computer Science, 2004. Proceedings. 45th
  Annual IEEE Symposium on}, pages 228--237. IEEE, 2004.

\bibitem{shmoys2006approximation}
David~B Shmoys and Chaitanya Swamy.
\newblock An approximation scheme for stochastic linear programming and its
  application to stochastic integer programs.
\newblock {\em Journal of the ACM (JACM)}, 53(6):978--1012, 2006.

\bibitem{srinivasan2007approximation}
Aravind Srinivasan.
\newblock Approximation algorithms for stochastic and risk-averse optimization.
\newblock In {\em Proceedings of the eighteenth annual ACM-SIAM symposium on
  Discrete algorithms}, pages 1305--1313. Society for Industrial and Applied
  Mathematics, 2007.

\bibitem{V08}
Jan Vondr{\'a}k.
\newblock Optimal approximation for the submodular welfare problem in the value
  oracle model.
\newblock In {\em Proceedings of the 40th annual ACM symposium on Theory of
  computing}, pages 67--74. ACM, 2008.

\bibitem{Vondrak13-symmetry_gap}
Jan Vondr{\'a}k.
\newblock Symmetry and approximability of submodular maximization problems.
\newblock {\em SIAM J. Comput.}, 42(1):265--304, 2013.

\bibitem{YHLC13-APM}
De-Nian Yang, Hui-Ju Hung, Wang-Chien Lee, and Wei Chen.
\newblock {Maximizing acceptance probability for active friending in online
  social networks}.
\newblock In {\em KDD'13}, 2013.

\end{thebibliography}
\clearpage
\appendix
\section{Chernoff bound}
We use the following bound in the paper (See for example~\cite[Theorem 4.4]{MU05}).
\begin{theorem}[Chernoff bound]
Suppose $X_1,\ldots, X_n$ are independent random variables taking values in $[0,1]$ and let $X$ denote their sum and $\mu$ be the expected value of $X$. Then, for $0<\delta<1$ we have that
$$Pr[X>(1+\delta)\mu]\leq \exp(-\frac{\delta^2\mu}{3})$$
\end{theorem}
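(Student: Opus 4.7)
The plan is to follow the standard moment generating function (MGF) approach to Chernoff bounds, adapted to $[0,1]$-valued random variables. First, I would apply Markov's inequality to the nonnegative random variable $e^{tX}$ for a parameter $t>0$ to be chosen later, obtaining $\Pr[X>(1+\delta)\mu] = \Pr[e^{tX} > e^{t(1+\delta)\mu}] \leq e^{-t(1+\delta)\mu}\,\E[e^{tX}]$. By independence of the $X_i$'s, $\E[e^{tX}] = \prod_{i=1}^{n} \E[e^{tX_i}]$.

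The key structural step is bounding $\E[e^{tX_i}]$ for $X_i \in [0,1]$. Since $x \mapsto e^{tx}$ is convex on $[0,1]$, the chord inequality gives $e^{tX_i} \leq 1 + (e^t - 1)X_i$ pointwise (interpolating between the values at $0$ and $1$). Taking expectations yields $\E[e^{tX_i}] \leq 1 + (e^t-1)\E[X_i] \leq \exp((e^t-1)\E[X_i])$ via $1+y \leq e^y$. Multiplying across $i$ gives $\E[e^{tX}] \leq \exp((e^t-1)\mu)$, and so $\Pr[X>(1+\delta)\mu] \leq \exp\bigl((e^t-1)\mu - t(1+\delta)\mu\bigr)$.

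Optimizing the exponent over $t$ by differentiating, I would set $t = \ln(1+\delta) > 0$, which yields the classical Chernoff form
\[
\Pr[X>(1+\delta)\mu] \leq \left(\frac{e^{\delta}}{(1+\delta)^{1+\delta}}\right)^{\mu}.
\]
Taking logarithms, it remains to prove the analytic inequality $g(\delta) := \delta - (1+\delta)\ln(1+\delta) \leq -\delta^2/3$ for $0<\delta<1$, which I expect to be the main (though routine) obstacle. I would handle it by defining $h(\delta) = g(\delta) + \delta^2/3$, checking $h(0)=0$ and $h'(0)=0$, and then showing $h'(\delta) = -\ln(1+\delta) + 2\delta/3 \leq 0$ on $(0,1)$ by a further derivative comparison (equivalently, $3\ln(1+\delta) \geq 2\delta$ fails in the right direction, so one verifies it on $(0,1)$ via monotonicity of $h''$ and the endpoint values). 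Combining this inequality with the displayed bound gives the stated $\exp(-\delta^2\mu/3)$, completing the proof.
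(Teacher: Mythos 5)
Your proof is correct and is the standard exponential-moment-generating-function argument; the paper itself gives no proof of this statement, merely citing \cite[Theorem~4.4]{MU05}, where the same route is taken (Markov on $e^{tX}$, factorization by independence, the Hoeffding chord bound $e^{tx}\leq 1+(e^t-1)x$ for $x\in[0,1]$ to cover general $[0,1]$-valued variables, and $t=\ln(1+\delta)$). One small note: your parenthetical ``$3\ln(1+\delta)\geq 2\delta$ fails in the right direction'' is garbled---the inequality $\ln(1+\delta)\geq 2\delta/3$ on $(0,1)$ is exactly what you need and it does hold, as your second-derivative plan verifies: with $h'(\delta)=2\delta/3-\ln(1+\delta)$ one has $h'(0)=0$, $h''(\delta)=2/3-1/(1+\delta)$ is negative on $(0,1/2)$ and positive on $(1/2,1)$, so $h'$ first decreases then increases, and since $h'(1)=2/3-\ln 2<0$ we get $h'\leq 0$ on all of $(0,1)$; just state this plainly rather than hedging.
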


\end{document}